\newcolumntype{C}[1]{>{\centering\let\newline\\\arraybackslash\hspace{0pt}}m{#1}}
\definecolor{synred}{RGB}{153,0,0}
\newcommand{\BEAS}{\begin{eqnarray*}}
\newcommand{\EEAS}{\end{eqnarray*}}
\newcommand{\BEA}{\begin{eqnarray}}
\newcommand{\EEA}{\end{eqnarray}}
\newcommand{\BIT}{\begin{itemize}}
\newcommand{\EIT}{\end{itemize}}
\newcommand{\BNUM}{\begin{enumerate}}
\newcommand{\ENUM}{\end{enumerate}}
\newcommand{\MC}{\mathcal}
\newcommand{\MB}{\mathbb}
\newcommand{\NN}{\nonumber}
\newcommand{\argmin}{\mathop{\rm argmin}}
\newcommand{\minimize}{\mathop{\rm minimize}}
\newcommand{\tr}{\mathrm{tr}}
\newcommand{\var}{\mathbb{V}{\rm ar}}
\newcommand{\sign}{\mathrm{sign}}
\newtheorem{theorem}{Theorem}
\newtheorem{lemma}{Lemma}
\title{Fixed support positive-definite modification of covariance matrix estimators
via linear shrinkage}
\date{}
\author{Young-Geun Choi}
\affil{Public Health Sciences Division,\\Fred Hutchinson Cancer Research Center, Seattle, WA, USA}
\author{Johan Lim}
\affil{Department of Statistics, Seoul National University, Seoul, Korea}
\author{Anindya Roy and Junyong Park}
\affil{Department of Mathematics and Statistics, University of Maryland, Baltimore County, MD, USA}
\begin{document}

\maketitle
\baselineskip 24pt
\begin{abstract}
\noindent 
In this work, we study the positive definiteness (PDness) problem in
covariance matrix estimation. For high dimensional data, 
many regularized estimators are proposed under structural assumptions
on the true covariance matrix including sparsity. They are shown to be asymptotically
consistent and rate-optimal in estimating the true
covariance matrix and its structure. However, many of them do not
take into account the PDness of the estimator and produce a non-PD estimate.
To achieve the PDness, researchers 
consider additional regularizations (or constraints) on
eigenvalues, which make both the asymptotic analysis and
computation much harder. In this paper, we propose a simple
modification of the regularized covariance matrix
estimator to make it PD while preserving the support.  
We revisit the idea of linear shrinkage and propose to take a convex combination between
the first-stage estimator (the regularized covariance matrix without
PDness) and a given form of diagonal matrix.
The proposed modification, which we denote as FSPD (Fixed Support and Positive 
Definiteness) estimator, is shown to preserve the asymptotic properties of the first-stage estimator,
if the shrinkage parameters are carefully selected. It has a closed form expression
and its computation is optimization-free, unlike existing PD sparse estimators.
In addition, the FSPD is generic in the sense that it can be applied to any non-PD matrix including the precision matrix. 
The FSPD estimator is numerically
compared with other sparse PD estimators to understand its finite sample properties
as well as its computational gain. It is also applied to two multivariate procedures
relying on the covariance matrix estimator --- the linear minimax classification problem 
and the Markowitz portfolio optimization problem --- and
is shown to substantially improve the performance of both procedures. 

\vskip0.5cm
\noindent {\bf Key words:} Covariance matrix; fixed support; high dimensional estimation; linear minimax classification problem; linear shrinkage; mean-variance portfolio optimization; precision matrix; positive definiteness.
\end{abstract}

\section{Introduction}\label{sec:introduction}
Covariance matrix and its consistent estimation are involved in many multivariate statistical procedures, where the sample
covariance matrix is popularly used. In recent, high dimensional data are prevalent everywhere for which the sample covariance
matrix is known to be inconsistent \citep{Marcenko1967}. To resolve the difficulty from high dimensionality, 
regularized procedures
or estimators are proposed under various structural assumptions on the true matrix.  For instance, if the true covariance matrix is
assumed to be sparse or banded, one thresholds the elements of the sample covariance matrix to satisfy the assumptions
\citep{Bickel2008a,Bickel2008,Cai2010,Cai2011a,Cai2012c,Cai2012f,Rothman2009}
or penalizes the Gaussian likelihood \citep{Bien2011,Lam2009}. The asymptotic theory of
the regularized estimators are well understood
and, particularly, they are shown to be consistent in estimating the true covariance matrix and its support (the positions of its non-zero elements).
The main interest of this paper is positive definiteness (PDness) of covariance matrix estimator.
The PDness is an essential property for the validity of many multivariate statistical procedures.
However, the regularized covariance matrix estimators recently studied are often not PD in finite sample. This is because they more focus
on the given structural assumptions and do not impose the PDness on their estimators. For example, the banding or
thresholding
 method \citep{Bickel2008,Rothman2009} regularizes the sample covariance matrix in an elementwise manner and provides an
 explicit form of the estimator that satisfies the given assumptions. Nonetheless, 
 the eigenstructure of the resulting estimator is 
 completely unknown 
 and, without doubt, the resulting covariance matrix estimate is not necessarily PD.
 A few efforts are made to find an estimator which attains both the sparsity and PDness 
by \emph{incorporating} them in a single optimization problem \citep{Bien2011,Lam2009,Liu2014,Rothman2012,Xue2012}.
In particular,
 the works by \citet{Rothman2012}, \citet{Xue2012}, and \citet{Liu2014} understand the soft thresholding of the sample
 covariance (correlation) matrix as a convex minimization problem and
add a convex penalty (or constraint) to the problem in order to guarantee the PDness of solution. However, 
we remark that each of these incorporating approaches has to be customized to a certain regularization technique
(e.g. soft thresholding estimator)  and also requires us to solve a large-scale optimization problem.

Instead of simultaneously handling PDness and regularization, we propose a \emph{separated} update of a given 
covariance matrix estimator. 
Our motivation is that the regularized estimators in the literature are already proven to be ``good'' in terms of
consistency or rate-optimality in estimating their true counterparts. Thus, we aim to
minorly modify them to retain the same asymptotic properties as well as be PD with the same support. To be specific,
denote by $\widehat{\bf \Sigma}$ a given covariance matrix estimator.
We consider a distance minimization problem
\begin{equation} \label{eqn:ideal-opt}
\minimize_{\widehat{\bf \Sigma}^*}\,
\bigg\{ \Big\| \widehat{\bf \Sigma}^* - \widehat{\bf \Sigma} \Big\| \,:\,
	\gamma_{1}(\widehat{\bf \Sigma}^*) \geq \epsilon
	,~ {\rm supp}(\widehat{\bf \Sigma}^*) = {\rm supp}(\widehat{\bf \Sigma})
	, ~ \widehat{\bf \Sigma}^* = (\widehat{\bf \Sigma}^*)^{\top}
\bigg\}
\end{equation}
where $\epsilon >0$ is a pre-determined small constant and $\gamma_{1}(\widehat{\bf \Sigma}^*)$ denotes the smallest
eigenvalue of
$\widehat{\bf \Sigma}^*$.
In solving (\ref{eqn:ideal-opt}), to make the modification simple, we further restrict the class of $\widehat{\bf \Sigma}^*$ to
a family of linear
shrinkage of $\widehat{\bf \Sigma}$ to the identity
matrix that is
\begin{equation} \label{eqn:lspd}
\widehat{\bf \Sigma}^* ~ \in \left\{ {\bf \Phi}_{\mu, \alpha}\big(\widehat{\bf \Sigma}\big) \equiv \alpha
\widehat{\bf \Sigma} + (1 - \alpha) \mu {\bf I} ~:~ \alpha \in [0,1], ~\mu \in \mathbb{R} \right\}.
\end{equation}
The primary motivation for considering \eqref{eqn:lspd} is that shrinking $\widehat{\bf \Sigma}$ linearly to the identity enables us to
handle the eigenvalues of $\widehat{\bf \Sigma}^*$ easily while preserving the support of
$\widehat{\bf \Sigma}$.
We will reserve  the term \emph{fixed support PD} (FSPD) estimator to describe any  estimator of the form \eqref{eqn:lspd} that solves the minimization problem \eqref{eqn:ideal-opt}, and refer to the process of
 modification as an FSPD procedure.
The proposed FSPD estimator/procedure has several interesting and important  properties.
 First, 
the calculation of ${\bf \Phi}_{\mu, \alpha}\big(\widehat{\bf \Sigma}\big)$ is optimization-free, since the choice of $\mu$ and $\alpha$ can be explicitly expressible with the smallest and largest eigenvalues of the initial estimator  $\widehat{\bf \Sigma}$. Second, for suitable choices of $\mu$ and $\alpha$ in (\ref{eqn:lspd}), the  estimators ${\bf \Phi}_{\mu, \alpha}\big(\widehat{\bf \Sigma}\big)$  and $\widehat{\bf \Sigma}$ have the same  rate of convergence to the true covariance matrix $\bf \Sigma$ under some conditions. Third, owing to the separating nature, the FSPD procedure is equally applicable to any (possibly)  non-PD estimator of covariance matrix as well as precision matrix (the inverse of covariance matrix).

The rest of the paper is organized as follows.
In Section \ref{sec:previous},
we illustrate some simulated examples of estimators that have non-PD outcomes.
Recent state-of-the-art sparse covariance estimators which guarantee the PDness are also
briefly reviewed. Our main results are presented in Section \ref{sec:LSPD}.
The FSPD procedure is developed
by solving the restricted distance minimization presented in (\ref{eqn:ideal-opt}) and (\ref{eqn:lspd}).
We not only derive statistical convergence rate
of the resulting FSPD estimator, but also discuss some implementation issues for its practical use.
In Section \ref{sec:simulation}, we numerically show that FSPD estimator has
 comparable risks with recent PD sparse covariance matrix estimators introduced in Section \ref{sec:previous}, whereas ours
 are computationally much simpler and faster. In Section \ref{sec:plug-in}, we illustrate the usefulness of FSPD-updated regularized covariance estimators in two statistical procedures: the linear minimax classification problem
  \citep{Lanckriet2002} and Markowitz portfolio optimization with no short-sale \citep{Jagannathan2003}.
Since the FSPD procedure is applicable to any covariance matrix estimators including precision matrix
estimators, we briefly discuss this extendability in Section \ref{sec:prec}.
Finally, Section \ref{sec:concluding} is for concluding remarks.

{\bf Notations:} We assume all covariance matrices are of size $p \times p$.  Let ${\bf \Sigma}$,
$\bf S$ and $\widehat{\bf \Sigma}$ be the true covariance matrix, the sample covariance matrix, and a generic covariance 
matrix estimator, respectively. For a symmetric matrix ${\bf A}$, the ordered  eigenvalue will be denoted by $\gamma_1({\bf A}) \leq \cdots \leq \gamma_p({\bf A})$. In particular, we abbreviate $\gamma_i({\bf \Sigma})$ to $\gamma_i$ and $\gamma_i(\widehat{\bf \Sigma})$ to $\widehat{\gamma}_i$. The Frobenius norm of ${\bf A}$ is
 defined with scaling by $\|{\bf A}\|_{\rm F} := \sqrt{\tr({\bf A}^{\top}{\bf A})/p}$. The spectral norm
 of ${\bf A}$ is $\|{\bf A}\|_2 := \sqrt{\gamma_{p}({\bf A}^{\top}{\bf A})} \equiv \max_i |\gamma_{i}({\bf A})|$. The
 norm without subscription, $\|\cdot\|$, will be used in the case both the spectral and Frobenius norm are applicable.

\section{Covariance regularization and PDness}\label{sec:previous}

\subsection{Non-PDness of regularized covariance matrix estimators}

In this section, we briefly review two most common regularized covariance matrix estimators and
discuss their PDness.

\subsubsection{Thresholding estimators}
The thresholded sample covariance matrix (simply \emph{thresholding
 estimator}) regards small elements in the sample covariance matrix as noise and set them to zero. For a fixed
 $\lambda$, the estimator is defined by
\begin{equation}\label{estimator:thr}
\widehat{\bf \Sigma}_{\lambda}^{\rm Thr} := \Big[ T_{\lambda}(s_{ij}), 1 \le i,j \le p \Big],
\end{equation}
where $T_{\lambda}(\cdot) : \MB{R} \rightarrow \MB{R}$ is a thresholding function \citep{Rothman2009}.
Some examples of the thresholding function  are
(i) (hard thresholding) $T_{\lambda}(s) = I(|s| \geq \lambda) \cdot s$,
 (ii) (soft thresholding) $T_{\lambda}(s) = \sign(s) \cdot (|s| - \lambda)_+$, and (iii) (SCAD thresholding)
$T_{\lambda}(s) = I(|s| < 2\lambda)\cdot \sign(s) \cdot (|s| - \lambda)_+$  $+$
$I(2\lambda < |s| \leq a\lambda) \cdot \{(a - 1)s - \sign(s) a \lambda\} / (a - 2)$ $+$
$I(|s| > a\lambda)\cdot s$ with $a>2$.
One may threshold only the off-diagonal elements of $\bf S$, in which case the estimator remains asymptotically
equivalent to (\ref{estimator:thr}).
 The universal threshold $\lambda$  in (\ref{estimator:thr}) can be adapted to each element;
 \cite{Cai2011b} suggests an \emph{adaptive thresholding estimator} that thresholds each $s_{ij}$ with an
 element-adaptive threshold $\lambda_{ij}$. In section \ref{sec:plug-in}, we will use a adaptive thresholding
 estimator with the soft thresholding function.
 
As we point out earlier, the elementwise manipulation in the thresholding estimators
does not retain its PDness.
The left panel of Figure \ref{figure:thresbanding} plots the minimum eigenvalues of thresholding
 estimators with various thresholding functions and tuning parameters, when the dataset of $n=100$ and $p=400$ are sampled from the multivariate $t$-distribution with $5$ degrees of freedom and true covariance matrix ${\bf M}_1$ defined in Section \ref{sec:simulation}.
As shown in
the figure, $\widehat{\bf \Sigma}_{\lambda}^{\rm Thr}$ is not PD if $\lambda$ is moderately small or selected via the five-fold cross-validation (CV) explained in \cite{Bickel2008a}.

\subsubsection{Banding estimators}
The banded covariance matrix arises when the variables of data are ordered and serially dependent as in the data of time series, climatology, or spectroscopy. \cite{Bickel2008} proposes a class of \emph{banding estimators}
\[
\widehat{\bf \Sigma}_h^{\rm Band} = \Big[ s_{ij} \cdot w_{|i-j|} \Big],
\]
where banding weight $w_m$ ($m = 0, 1, \ldots, p-1$) is proposed by $w_m = I( m \leq h)$ in the paper for a fixed
 bandwidth $h$. Roughly speaking, $\widehat{\bf \Sigma}_h^{\rm Band}$ discards the sample
covariance $s_{ij}$ if corresponding indices $i$ and $j$ are distant.  \cite{Cai2010}
considers a \emph{tapering estimator} which smooths the banding weight in $\widehat{\bf \Sigma}_h^{\rm Band}$ as
\[
w_m = \begin{cases} 1, & {\rm when~} m \leq h/2 \\
	2 - 2m/h, & {\rm when~} h/2 < m \leq h \\
	0, & {\rm otherwise}.
	\end{cases}
\]
We could see that both banding estimators are again from elementwise operations on
the sample covariance matrix and, as in the thresholding estimators, do not
retain the PDness of the sample estimator.
The right panel of Figure \ref{figure:thresbanding} tells that this is indeed. The
panel is based on the same simulated data set for the left panel and
shows that the two estimators result in non-PD estimates regardless
of the size of bandwidth.

\begin{figure}[htb!]
\centering
\begin{minipage}{.45\textwidth}
  \centering
  \includegraphics[width=1.0\linewidth]{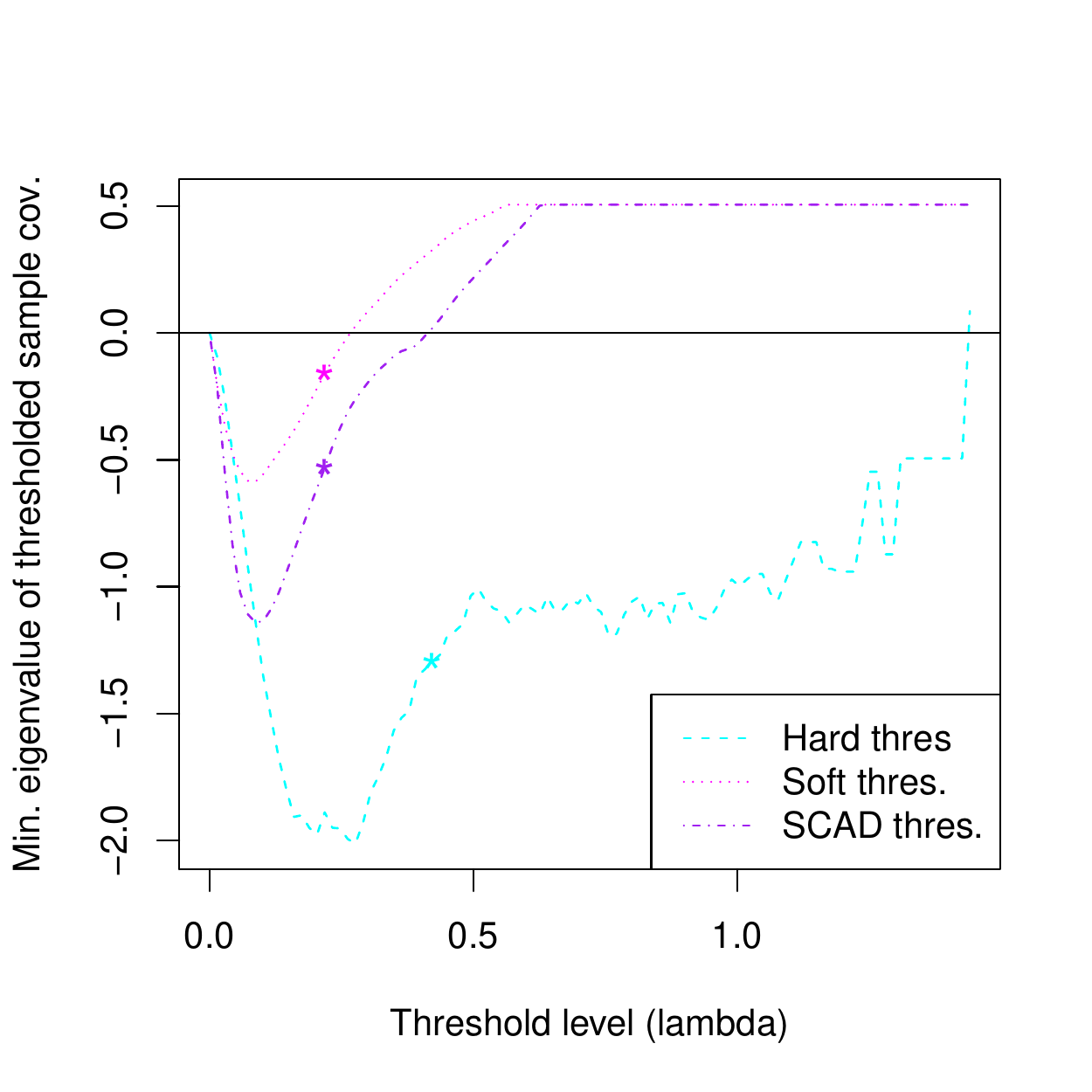}
\end{minipage}%
\begin{minipage}{.45\textwidth}
  \centering
 \includegraphics[width=1.0\linewidth]{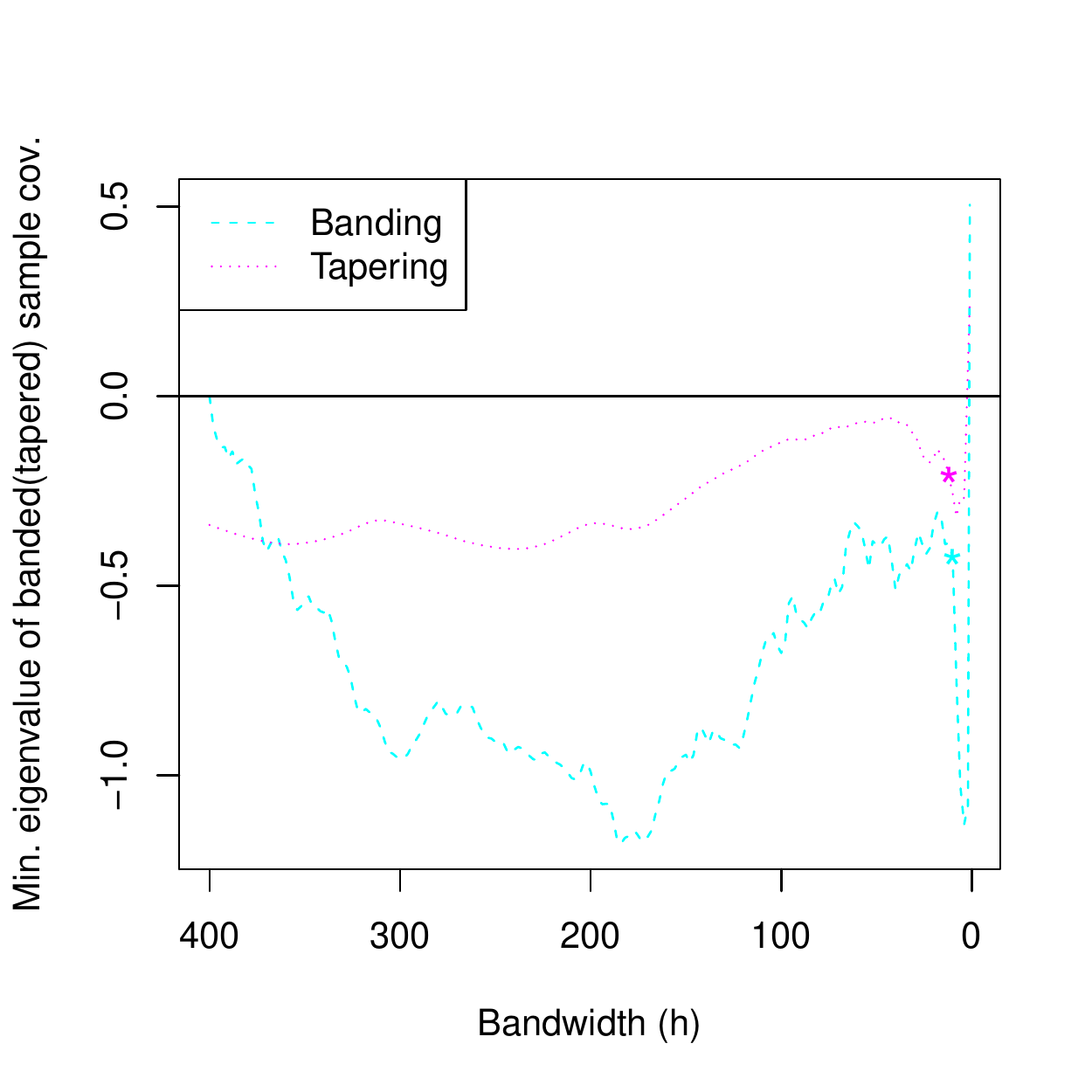}
\end{minipage}
\caption{Miminum eigenvalues of regularized covariance matrix estimators
 ($n=100$, $p=400$). Left for thresholding estimators with threshold ($\lambda$) and
  right for banding estimators with varying bandwidth ($h$). The star ($*$)-marked point is
   the optimal threshold (bandwidth) selected by a five-fold cross-validation (CV). Note that all the
   CV-selected estimates are not PD.}
\label{figure:thresbanding}
\end{figure}

\subsection{Adding PDness to structural (sparse) regularization}\label{subsec:optPD}

The PDness of regularized covariance matrix estimator recently get attention of researchers and a limited number of works are done in the literature; see \citep{Rothman2012,Xue2012,Liu2014}. The three approaches, namely PD sparse estimators,
 are all based on the fact  that
the soft thresholding estimator (the thresholding estimator equipped with the soft thresholding function) can be obtained by minimizing the following convex
function:
\begin{equation}\label{estimator:soft}
\widehat{\bf \Sigma}^{\rm Soft}({\lambda}) = \Big[ (|s_{ij}| - \lambda)_+ \cdot \sign(s_{ij}) \Big] = \argmin_{{\bf \Sigma}} \,
	\|{\bf \Sigma} - {\bf S}\|_{\rm F}^2 + \lambda \sum_{i \leq j} |\sigma_{ij}|.
\end{equation}
In each work, a constraint or penalty function is added to (\ref{estimator:soft}) to encourage the solution to be PD. For example, \cite{Rothman2012} considers to add a log-determinant penalty:
\begin{equation}\label{estimator:logdet}
\widehat{\bf \Sigma}^{\rm logdet}(\lambda) := \argmin_{\bf \Sigma} \,
	\|{\bf \Sigma} - {\bf S}\|_{\rm F}^2 + \tau \log \det \left( {\bf \Sigma} \right) + \lambda \sum_{i < j} |\sigma_{ij}|,
\end{equation}
where $\tau > 0$ is fixed a small value.
The additional term behaves a convex barrier that naturally ensures the PDness of the
solution and preserves the convexity the objective function. In the paper, the author solves the normal equations
with respect to ${\bf \Sigma}$, where column vectors of the current estimate ${\bf \Sigma}$ are alternatingly updated by solving $(p-1)$-variate lasso regressions. We note that, even each lasso regression
can be calculated fast, repeating it for every column leads to $O(p^3)$ flop computations for the entire matrix to be updated.
On the other hand, \cite{Xue2012} proposes to solve
\begin{equation}\label{estimator:EigCon}
\widehat{\bf \Sigma}^{\rm EigCon}(\lambda) := \argmin_{{\bf \Sigma} \,:\, {\bf \Sigma} \succeq \epsilon{\bf I}} \,
	\|{\bf \Sigma} - {\bf S}\|_{\rm F}^2 + \lambda \sum_{i < j} |\sigma_{ij}|.
\end{equation}
where ${\bf \Sigma} \succeq \epsilon{\bf I}$ means that ${\bf \Sigma} - \epsilon{\bf I}$ is positive semidefinite.
Since the additional constraint $\{ {\bf \Sigma} \,:\, {\bf \Sigma} \succeq \epsilon{\bf I} \}$ is a convex set, (\ref{estimator:EigCon}) still has a global optimum.
The objective function is optimized via an alternating direction method of multipliers (ADMM) algorithm \citep{Boyd2010}, in which each iteration contains the eigenvalue decomposition and thresholding of a $p$ by $p$ matrix. The computational cost of the algorithm is, mainly due to eigenvalue decomposition, $O(p^3)$ flops per iteration which may be as demanding as Rothman's method. Similarly \cite{Liu2014} also considers (\ref{estimator:EigCon}) in which ${\bf S}$ is replaced by the sample correlation matrix.

We conduct a small simulation study on the above PD-sparse covariance matrix estimators to understand their empirical risk and computational hardness.
 Table  \ref{table:timecomp0} lists empirical risks and computation times of $\widehat{\bf \Sigma}^{\rm logdet}$, $\widehat{\bf \Sigma}^{\rm EigCon}$, and $\widehat{\bf \Sigma}^{\rm Soft}$, for a fixed tuning parameter, based on 100 replications from the same simulation settings as that used in  Figure \ref{figure:thresbanding}.
Both of the PD sparse covariance estimators show comparable performance to that of the soft thresholding estimator. 
However, the computation  times of PD sparse estimators
 are substantially increased.
According to the table, when the five-fold cross validations are done for tuning parameter selection with 100 candidates of $\lambda$,
it takes more than 2,500 seconds to get either $\widehat{\bf \Sigma}^{\rm logdet}$ or $\widehat{\bf \Sigma}^{\rm EigCon}$,
while $\widehat{\bf \Sigma}^{\rm Soft}$ only costs 6 seconds.
\begin{table}[h]{\small
\begin{center}
\begin{tabular}{c|ccccc}
\hline
	  & Matrix $l_1$ & Matrix $l_2$ & Frobenius & \#(PD) & Comp. Time (sec)  \\
\hline
Soft thresholding &  19.34  & 7.47   & 25.94  & 0/100 & 0.01  \\
Eigen. constraint  & 18.92  & 7.45  & 25.92  & 100/100 & 4.93 \\
log-det barrier  & 18.90  & 7.45  & 25.99  & 100/100  & 2.33 \\
\hline				
\end{tabular}
\caption{Averaged empirical risks, frequency of positive definite estimates, and computing time for one fixed tuning parameter under 100 replications,
measured on Intel Core i7-2600 3.4GHz CPU and 16GB RAM. The data is generated under
$n=100$, $p=400$, MV-$t$ distribution, and ${\bf \Sigma}= {\bf M}_1$ which is introduced in Section \ref{sec:simulation}.
Computational convergence criteria is set as the relative error be smaller than $10^{-7}$.
$\epsilon$ and $\tau$ are set as $10^{-2}$ for both the eigenvalue constraint method and the log-determinant barrier method.}
\label{table:timecomp0}
\end{center}
}
\end{table}

We finally remark that the above optimization-based PD sparse (regularized) estimators
are only applicable to a specific type of regularized estimator, that is expressible as the 
minimum of a convex function.
Note that not all regularized estimators have the convex expression; for example, the hard-thresholding 
estimator or SCAD estimator does not have the convex expression. In addition, even though an initial
regularized estimator can be written as the solution to a convex optimization problem, the modified objective 
function with the PD constraint (or penalty) strongly depends on the initial convex expression. Thus, the corresponding
PD regularized 
estimator should re-defined; the algorithm to solve the modified optimization problem should re-developed; 
and the statistical property of the resulting estimator should also be re-investigated, as a new.

\section{The linear shrinkage for fixed support positive definiteness (FSPD)}\label{sec:LSPD}

We now explain how the FSPD procedure is applied to any covariance matrix estimator that possibly lacks PDness. Let $\widehat{\bf \Sigma}$ be a given initial estimator. Recall that we propose a class of linear shrinkage as a modification of $\widehat{\bf \Sigma}$,
\begin{equation}  \label{eqn:lspd2}
{\bf \Phi}_{\mu, \alpha}\big(\widehat{\bf \Sigma}\big) := \alpha \widehat{\bf \Sigma} + (1 - \alpha) \mu {\bf I},
\end{equation}
where $\alpha \in [0,1]$ and $\mu \in \mathbb{R}$.
Our aim is to minimize the distance between
 ${\bf \Phi}_{\mu, \alpha}\big(\widehat{\bf \Sigma}\big)$ and $\widehat{\bf \Sigma}$, while
keeping the minimum eigenvalue of ${\bf \Phi}_{\mu, \alpha}\big(\widehat{\bf \Sigma}\big)$ 
positive. We state this constrained minimization problem quantitatively in Section \ref{subsec:distance} and derive the minimum for $\alpha$ while fixing $\mu$ as a constant in Section \ref{subsec:alpha}. Subsequently, we determine the condition of $\mu$ that minimizes the distances for the spectral and Frobenius norms in Section \ref{subsec:mu}. The statistical convergence rate of ${\bf \Phi}_{\mu, \alpha}\big(\widehat{\bf \Sigma}\big)$ for carefully chosen $\mu$ and $\alpha$ are established in Section \ref{subsec:convrate}. Moreover, we discuss 
the fast computation of the FSPD procedure in Section
\ref{subsec:computation}.

\subsection{Distance minimization}\label{subsec:distance}

We set a small cut-point $\epsilon > 0$ to determine whether $\widehat{\bf \Sigma}$ is PD; we will modify $\widehat{\bf \Sigma}$ if $\widehat{\gamma}_{1} < \epsilon$, otherwise we do not need to update it.
Let $\widehat{\gamma}_{1} < \epsilon$. We solve the following minimization problem:

\begin{eqnarray}
\minimize_{\mu, \alpha \in \MB{R}} &&
	\left\| {\bf \Phi}_{\mu, \alpha}\big(\widehat{\bf \Sigma}\big)
	- \widehat{\bf \Sigma} \right\| \label{objective:1} \\
\mbox{subject to} &&
	\begin{array}{l}
  	\alpha \widehat{\gamma}_1 + (1 - \alpha) \mu \geq \epsilon \, ; \\
  	\alpha \in [0,1).	
	\end{array} \NN
\end{eqnarray}
In (\ref{objective:1}), the first constraint enforces the minimum eigenvalue of ${\bf \Phi}_{\mu, \alpha}\big(\widehat{\bf \Sigma}\big)$ to be at least $\epsilon$. The second constraint specifies the range of $\alpha$, intensity of linear shrinkage, where $\alpha=0$ corresponds to the complete shrinkage to $\mu {\bf I}$  and $\alpha=1$ denotes no shrinkage.
 By the assumption $\widehat{\gamma}_{1} < \epsilon$, the two constraints also imply that $\mu \geq \epsilon$
\[
\mu \,\geq\,
	\frac{1}{1-\alpha} \epsilon - \frac{\alpha}{1-\alpha} \widehat{\gamma}_{1}
    \,>\, \frac{1}{1-\alpha} \epsilon - \frac{\alpha}{1-\alpha} \epsilon
    \,=\, \epsilon
\]
for any $\alpha \in [0,1)$.

\subsection{The choice of $\alpha$}\label{subsec:alpha}

Let  $\mu \in [\epsilon, \infty)$ be fixed. Provided that $\widehat{\gamma}_{1} < \epsilon$, 
we have $\widehat{\gamma}_{1} < \epsilon \leq \mu$ and
\begin{equation}\label{ineqn:alpha}
 1-\alpha \,\geq\,
 	\frac{\epsilon - \widehat{\gamma}_{1}}{\mu - \widehat{\gamma}_{1}}
\end{equation}
from the constraints of (\ref{objective:1}). Observe that the objective function satisfies 
$\| {\bf \Phi}_{\mu, \alpha}\big(\widehat{\bf \Sigma}\big) - \widehat{\bf \Sigma} \|$ $=$ $(1 - \alpha) \| \mu {\bf I} - \widehat{\bf \Sigma} \|$ and then achieves its minimum when $(1-\alpha)$ touches the lower bound in (\ref{ineqn:alpha}).
Thus, we have the following lemma.
\begin{lemma}\label{lemma:alpha} Let $\widehat{\bf \Sigma}$ be given and assume $\epsilon > \widehat{\gamma}_{1}$. Then for any given $\mu \in [\epsilon, \infty)$, the problem (\ref{objective:1}), with respect to $\alpha$, is minimized at
\begin{equation}\label{eqn:alpha}
  \alpha^* := \alpha^*(\mu) \,=\,
 	\frac{\mu - \epsilon}{\mu - \widehat{\gamma}_{1}}.
\end{equation}
\end{lemma}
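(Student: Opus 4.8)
The plan is to reduce the bivariate problem \eqref{objective:1} to a univariate one in $\alpha$ by first observing that the objective factors cleanly. Since ${\bf \Phi}_{\mu,\alpha}(\widehat{\bf \Sigma}) - \widehat{\bf \Sigma} = \alpha\widehat{\bf \Sigma} + (1-\alpha)\mu{\bf I} - \widehat{\bf \Sigma} = (1-\alpha)(\mu{\bf I} - \widehat{\bf \Sigma})$, homogeneity of the norm gives $\|{\bf \Phi}_{\mu,\alpha}(\widehat{\bf \Sigma}) - \widehat{\bf \Sigma}\| = (1-\alpha)\|\mu{\bf I} - \widehat{\bf \Sigma}\|$ whenever $\alpha \in [0,1)$, because $1-\alpha > 0$. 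With $\mu$ fixed, the factor $\|\mu{\bf I} - \widehat{\bf \Sigma}\|$ is a nonnegative constant, so the objective is a nonnegative multiple of $(1-\alpha)$; it is therefore nondecreasing in $1-\alpha$, equivalently nonincreasing in $\alpha$. Hence minimizing the objective is the same as making $1-\alpha$ as small as possible (equivalently $\alpha$ as large as possible) subject to the constraints.

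Next I would extract the effective lower bound on $1-\alpha$. The second constraint already restricts $\alpha$ to $[0,1)$. The first constraint, $\alpha\widehat{\gamma}_1 + (1-\alpha)\mu \geq \epsilon$, rewrites as $(1-\alpha)(\mu - \widehat{\gamma}_1) \geq \epsilon - \widehat{\gamma}_1$; since $\mu \geq \epsilon > \widehat{\gamma}_1$ by hypothesis, the coefficient $\mu - \widehat{\gamma}_1$ is strictly positive, so dividing yields $1-\alpha \geq (\epsilon - \widehat{\gamma}_1)/(\mu - \widehat{\gamma}_1)$, which is exactly \eqref{ineqn:alpha}. Because $\epsilon - \widehat{\gamma}_1 > 0$ and $\mu - \widehat{\gamma}_1 \geq \epsilon - \widehat{\gamma}_1 > 0$, this lower bound lies in $(0,1]$, so the value $1-\alpha = (\epsilon-\widehat{\gamma}_1)/(\mu-\widehat{\gamma}_1)$ is feasible for the $\alpha \in [0,1)$ constraint as well (it gives $\alpha \in [0,1)$, with $\alpha = 0$ only in the boundary case $\mu = \epsilon$). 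Setting $1-\alpha$ equal to this bound and solving gives $\alpha = 1 - (\epsilon-\widehat{\gamma}_1)/(\mu-\widehat{\gamma}_1) = (\mu - \epsilon)/(\mu - \widehat{\gamma}_1)$, which is the claimed $\alpha^*(\mu)$ in \eqref{eqn:alpha}. Since the objective is nonincreasing in $\alpha$ and this is the largest feasible $\alpha$, it is the minimizer.

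There is essentially no hard step here; the argument is a one-line factorization plus a linear inequality manipulation. The only point requiring a moment of care is confirming that the candidate $\alpha^*$ genuinely satisfies \emph{both} constraints simultaneously — in particular that pushing $1-\alpha$ down to the bound dictated by the first constraint does not violate $\alpha \in [0,1)$ — and this is handled by the observation that $\mu \geq \epsilon$ forces the bound $(\epsilon-\widehat{\gamma}_1)/(\mu-\widehat{\gamma}_1)$ into $(0,1]$, so $\alpha^* \in [0,1)$. One could also note the degenerate edge case $\widehat{\bf \Sigma} = \mu{\bf I}$, where the objective is identically zero and any feasible $\alpha$ is a minimizer including $\alpha^*$; but under the standing assumption $\widehat{\gamma}_1 < \epsilon \leq \mu$ this does not actually occur, so the formula \eqref{eqn:alpha} stands unambiguously.
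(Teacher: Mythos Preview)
Your proposal is correct and follows essentially the same route as the paper: factor the objective as $(1-\alpha)\|\mu{\bf I}-\widehat{\bf \Sigma}\|$, rewrite the eigenvalue constraint as the lower bound \eqref{ineqn:alpha} on $1-\alpha$, and set $1-\alpha$ equal to that bound. Your write-up is slightly more thorough in explicitly checking that the resulting $\alpha^*$ lies in $[0,1)$, but the argument is the same.
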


\subsection{The choice of $\mu$}\label{subsec:mu}
By substituting (\ref{eqn:alpha}) into (\ref{objective:1}), we have a reduced problem, which depends only on $\mu$,
\begin{equation}\label{objective:2}
\minimize_{\mu \,:\, \mu > \epsilon} \quad
\left\| {\bf \Phi}_{\mu, \alpha^*}\Big(\widehat{\bf \Sigma}\Big)
	- \widehat{\bf \Sigma} \right\| ~ = ~
  	\frac{\epsilon - \widehat{\gamma}_{1}}{\mu - \widehat{\gamma}_{1}}
  	 \left\|   \mu {\bf I} - \widehat{\bf \Sigma}  \right\|.
\end{equation}
The solution of (\ref{objective:2}) differs according to how the matrix norm $\|\cdot\|$ is defined. We
consider two most popular matrix norms in below, the spectral norm $\|\cdot\|_2$ and the (scaled) Frobenius norm $\|\cdot\|_{\rm F}$.
\begin{lemma}[Spectral norm]\label{lemma:muspect} If $\widehat{\bf \Sigma}$ is given and $\epsilon > \widehat{\gamma}_{1}$, then
\[
\left\| {\bf \Phi}_{\mu, \alpha^*}\Big(\widehat{\bf \Sigma}\Big) 
	- \widehat{\bf \Sigma} \right\|_2
\,\geq\, \epsilon - \widehat{\gamma}_{1} ~~\mbox{ for all $\mu \geq \epsilon$},
\]
and the minimum of \eqref{objective:2} is achieved for any  $ \mu \geq 
	\mu_{\rm S} := \max \left\{ \epsilon, \frac{\widehat{\gamma}_{p} + \widehat{\gamma}_{1}}{2} \right\}$.
\end{lemma}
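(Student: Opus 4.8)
The plan is to reduce the minimization of the scalar-valued function
$f(\mu) := \frac{\epsilon - \widehat{\gamma}_{1}}{\mu - \widehat{\gamma}_{1}} \, \| \mu {\bf I} - \widehat{\bf \Sigma} \|_2$
over $\mu > \epsilon$ to an elementary calculus exercise, by first computing the spectral norm in closed form. Since $\widehat{\bf \Sigma}$ is symmetric, $\mu {\bf I} - \widehat{\bf \Sigma}$ has eigenvalues $\mu - \widehat{\gamma}_i$ for $i = 1, \dots, p$, so $\| \mu {\bf I} - \widehat{\bf \Sigma} \|_2 = \max_i |\mu - \widehat{\gamma}_i| = \max\{ |\mu - \widehat{\gamma}_1|, |\mu - \widehat{\gamma}_p| \}$, the maximum being attained at one of the extreme eigenvalues. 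For $\mu \geq \epsilon > \widehat{\gamma}_1$ we have $\mu - \widehat{\gamma}_1 > 0$, so this simplifies to $\max\{ \mu - \widehat{\gamma}_1, \, |\mu - \widehat{\gamma}_p| \}$.

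Next I would establish the claimed lower bound $f(\mu) \geq \epsilon - \widehat{\gamma}_1$. From the formula above, $\| \mu {\bf I} - \widehat{\bf \Sigma} \|_2 \geq \mu - \widehat{\gamma}_1$ for every $\mu \geq \epsilon$, and hence
$f(\mu) \geq \frac{\epsilon - \widehat{\gamma}_1}{\mu - \widehat{\gamma}_1} (\mu - \widehat{\gamma}_1) = \epsilon - \widehat{\gamma}_1$,
which is exactly the first assertion of the lemma; note $\epsilon - \widehat{\gamma}_1 > 0$ by hypothesis. It remains to characterize when equality holds, i.e., when this lower bound is actually achieved. Equality in the step above forces $\| \mu {\bf I} - \widehat{\bf \Sigma} \|_2 = \mu - \widehat{\gamma}_1$, which by the max-formula happens precisely when $\mu - \widehat{\gamma}_1 \geq |\mu - \widehat{\gamma}_p|$. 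I would split this into the two cases governed by the sign of $\mu - \widehat{\gamma}_p$: if $\mu \geq \widehat{\gamma}_p$, the inequality $\mu - \widehat{\gamma}_1 \geq \mu - \widehat{\gamma}_p$ is automatic (since $\widehat{\gamma}_1 \leq \widehat{\gamma}_p$); if $\mu < \widehat{\gamma}_p$, it becomes $\mu - \widehat{\gamma}_1 \geq \widehat{\gamma}_p - \mu$, i.e., $\mu \geq \frac{\widehat{\gamma}_1 + \widehat{\gamma}_p}{2}$. Combining the two cases, equality holds iff $\mu \geq \frac{\widehat{\gamma}_1 + \widehat{\gamma}_p}{2}$; intersecting with the feasibility constraint $\mu \geq \epsilon$ gives exactly $\mu \geq \mu_{\rm S} = \max\{ \epsilon, \frac{\widehat{\gamma}_p + \widehat{\gamma}_1}{2} \}$, as claimed.

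I do not expect a serious obstacle here — the argument is essentially bookkeeping once the spectral norm is written out. The one point to handle carefully is the degenerate possibility $\widehat{\gamma}_1 = \widehat{\gamma}_p$ (all eigenvalues equal), in which case $\widehat{\bf \Sigma}$ is a multiple of the identity and the analysis trivializes; and one should double-check that when $\epsilon > \frac{\widehat{\gamma}_p + \widehat{\gamma}_1}{2}$ the threshold $\mu_{\rm S} = \epsilon$ is consistent, meaning the minimum $\epsilon - \widehat{\gamma}_1$ is indeed attained at the boundary $\mu = \epsilon$ (there $\| \epsilon {\bf I} - \widehat{\bf \Sigma} \|_2 = \epsilon - \widehat{\gamma}_1$ since $\epsilon \geq \frac{\widehat{\gamma}_p + \widehat{\gamma}_1}{2}$ forces $\epsilon - \widehat{\gamma}_1 \geq \widehat{\gamma}_p - \epsilon$). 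Finally, a brief sanity remark: the optimal $\mu$ is not unique — any $\mu \geq \mu_{\rm S}$ works for the spectral norm — which is why the lemma states the minimum is achieved "for any $\mu \geq \mu_{\rm S}$" rather than at a single point; later considerations (e.g., the Frobenius-norm analysis or statistical rate) will pin down a canonical choice.
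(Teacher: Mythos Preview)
Your proposal is correct and follows essentially the same route as the paper: both reduce $\|\mu{\bf I}-\widehat{\bf \Sigma}\|_2$ to $\max\{|\mu-\widehat{\gamma}_1|,|\mu-\widehat{\gamma}_p|\}$ and then analyze the ratio $\max\{|\mu-\widehat{\gamma}_1|,|\mu-\widehat{\gamma}_p|\}/(\mu-\widehat{\gamma}_1)$, showing it equals $1$ exactly when $\mu\geq(\widehat{\gamma}_1+\widehat{\gamma}_p)/2$. The paper packages this via an auxiliary function $\psi_2(t)=\max\{|t-a|,|t-b|\}/(t-a)$, whereas you do the case split on the sign of $\mu-\widehat{\gamma}_p$ directly; the content is identical.
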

\begin{proof}
By the definition of the spectral norm,
\begin{eqnarray}
\frac{\epsilon - \widehat{\gamma}_{1}}{\mu - \widehat{\gamma}_{1}} \left\|   \mu {\bf I} - \widehat{\bf \Sigma}  \right\|_2
&=&
\frac{\epsilon - \widehat{\gamma}_{1}}{\mu - \widehat{\gamma}_{1}} \max_{i} \left| \mu - \widehat{\gamma}_i \right|
\NN \\
&=&
\frac{\epsilon - \widehat{\gamma}_{1}}{\mu - \widehat{\gamma}_{1}} 
\max \left\{ | \mu - \widehat{\gamma}_{p} |, | \mu - \widehat{\gamma}_{1} | \right\}.
\NN
\end{eqnarray}
Consider $\psi_2(t) := \max \left\{ | t - a |, | t - b | \right\} / (t - a)$ with $a < b$, $t \in (a, \infty)$. One can easily verify that $\psi_2(t) \equiv 1$
 when $\mu \geq (a + b)/2$, and
$\psi_2(t) > 1$ when $a < t < (a + b)/2$. Now substitute $t \leftarrow \mu$, $a \leftarrow \widehat{\gamma}_{1}$, and $b \leftarrow \widehat{\gamma}_{p}$.
\end{proof}
\noindent Lemma \ref{lemma:muspect} indicates that whenever we take $\mu$ such that $\mu \geq \mu_{\rm S}$,  the spectral-norm distance  between the updated and initial estimators is exactly equal to the
 difference between their smallest eigenvalues.

\begin{lemma}[Scaled Frobenius norm]\label{lemma:mufrob} Assume that $\widehat{\bf \Sigma}$ is given and $\epsilon > \widehat{\gamma}_{1}$. If $\widehat{\bf \Sigma} \neq c {\bf I}$ for any scalar $c$, then
\begin{equation}\label{eqn:lemma3-1}
\left\| {\bf \Phi}_{\mu, \alpha^*}\Big(\widehat{\bf \Sigma}\Big) 
	- \widehat{\bf \Sigma} \right\|_{\rm F}
\,\geq\, (\epsilon - \widehat{\gamma}_{1}) 
 \sqrt{\frac{\sum_{i=1}^p ({\widehat{\gamma}}_i - {\overline{\widehat{\gamma}}})^2}{\sum_{i=1}^p ({\widehat{\gamma}}_i - {\widehat{\gamma}}_1)^2} } 
~~\mbox{ for all $\mu > \epsilon$},
\end{equation}
where ${\overline{\widehat{\gamma}}}$ $:=$ $\sum_i \widehat{\gamma}_i / p$.
The equality holds if and only if
\[
\mu= \mu_{\rm F} :=\frac{\sum_{i=1}^p ({\widehat{\gamma}}_i - {\widehat{\gamma}}_1)^2}{\sum_{i=1}^p ({\widehat{\gamma}}_i - {\widehat{\gamma}}_1)}.
\]
In particular,
\begin{equation}\label{eqn:mufrob1}
\left\| {\bf \Phi}_{\mu, \alpha^*}\Big(\widehat{\bf \Sigma}\Big) 
	- \widehat{\bf \Sigma} \right\|_{\rm F}
\,\leq\, \epsilon - \widehat{\gamma}_{1} ~~\mbox{ for all }\mu \geq \mu_{\rm F}.
\end{equation}
\end{lemma}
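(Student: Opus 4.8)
The plan is to collapse the problem to minimizing a single explicit scalar function of $\mu$, and then to solve that minimization by a substitution that linearizes the way $\mu$ enters. By Lemma~\ref{lemma:alpha} we have $\alpha^* = (\mu-\epsilon)/(\mu-\widehat{\gamma}_1)$, hence $1-\alpha^* = (\epsilon-\widehat{\gamma}_1)/(\mu-\widehat{\gamma}_1)$ and ${\bf \Phi}_{\mu,\alpha^*}\big(\widehat{\bf \Sigma}\big) - \widehat{\bf \Sigma} = (1-\alpha^*)\big(\mu{\bf I}-\widehat{\bf \Sigma}\big)$. Since $\mu{\bf I}-\widehat{\bf \Sigma}$ is symmetric with eigenvalues $\mu-\widehat{\gamma}_i$ and $\|\cdot\|_{\rm F}$ is the scaled Frobenius norm (so $\|{\bf A}\|_{\rm F}^2 = p^{-1}\tr({\bf A}^\top{\bf A})$),
\[
\left\| {\bf \Phi}_{\mu,\alpha^*}\big(\widehat{\bf \Sigma}\big) - \widehat{\bf \Sigma} \right\|_{\rm F}^2
= \frac{(\epsilon-\widehat{\gamma}_1)^2}{p}\, g(\mu),
\qquad
g(\mu) := \frac{\sum_{i=1}^p (\mu-\widehat{\gamma}_i)^2}{(\mu-\widehat{\gamma}_1)^2},
\]
which is well defined on $\mu>\epsilon$ because then $\mu-\widehat{\gamma}_1>\epsilon-\widehat{\gamma}_1>0$. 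Everything therefore reduces to minimizing $g$ over $\mu>\epsilon$.

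The key step is the change of variables $u := 1/(\mu-\widehat{\gamma}_1)>0$, together with the shifted eigenvalues $d_i := \widehat{\gamma}_i-\widehat{\gamma}_1\ge 0$. Writing $\mu-\widehat{\gamma}_i = (\mu-\widehat{\gamma}_1)-d_i$ turns $g$ into a quadratic in $u$,
\[
g(\mu) = \sum_{i=1}^p (1-d_i u)^2 = p - 2u\sum_{i=1}^p d_i + u^2 \sum_{i=1}^p d_i^2 .
\]
Since $\widehat{\bf \Sigma}\neq c{\bf I}$ the $\widehat{\gamma}_i$ are not all equal, so $\sum_i d_i>0$ and $\sum_i d_i^2>0$; the parabola is strictly convex with unique minimizer $u^\star = \big(\sum_i d_i\big)/\big(\sum_i d_i^2\big)$ and minimum value $g_{\min} = p - \big(\sum_i d_i\big)^2/\big(\sum_i d_i^2\big)$. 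I would then identify the optimal $\mu$ — the one determined by $\mu-\widehat{\gamma}_1 = \big(\sum_i d_i^2\big)/\big(\sum_i d_i\big)$ — with $\mu_{\rm F}$, and confirm it lies in $(\epsilon,\infty)$, so that the ``if and only if'' is not vacuous.

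It then remains to recast these quantities in the stated form. The elementary identity $\sum_i d_i^2 - \big(\sum_i d_i\big)^2/p = \sum_i (d_i-\bar d)^2$ with $\bar d := p^{-1}\sum_i d_i$, together with $d_i-\bar d = \widehat{\gamma}_i-\overline{\widehat{\gamma}}$, gives
\[
\frac{g_{\min}}{p}
= 1 - \frac{\big(\sum_{i=1}^p d_i\big)^2}{p\sum_{i=1}^p d_i^2}
= \frac{\sum_{i=1}^p(\widehat{\gamma}_i-\overline{\widehat{\gamma}})^2}{\sum_{i=1}^p(\widehat{\gamma}_i-\widehat{\gamma}_1)^2};
\]
plugging this into the first display and taking square roots yields \eqref{eqn:lemma3-1}, and strict convexity of the parabola makes equality equivalent to $u=u^\star$, i.e. $\mu=\mu_{\rm F}$. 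For \eqref{eqn:mufrob1} I would use the factorization $g(\mu)-p = u\big(\sum_{i=1}^p d_i^2\big)\big(u-2u^\star\big)$: when $\mu\ge\mu_{\rm F}$ we have $0<u\le u^\star<2u^\star$, so $g(\mu)\le p$, and hence the Frobenius distance is at most $\epsilon-\widehat{\gamma}_1$ there.

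Nothing in this is deep. The only mildly delicate points are the bookkeeping of the change of variables and, above all, its inversion back to $\mu$ (so that the minimizing $u^\star$ is matched with the correct value $\mu_{\rm F}$), and invoking the hypothesis $\widehat{\bf \Sigma}\neq c{\bf I}$ precisely at the spot where the quadratic in $u$ would otherwise degenerate to a constant.
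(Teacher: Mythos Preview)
Your proposal is correct and essentially mirrors the paper's own proof: both reduce the Frobenius distance to the quadratic $\sum_i(1-d_i u)^2$ via the substitution $u=(\mu-\widehat{\gamma}_1)^{-1}$, $d_i=\widehat{\gamma}_i-\widehat{\gamma}_1$ (the paper writes $\theta,t_i$), minimize it, and recast the minimum using the variance identity. The only cosmetic difference is in \eqref{eqn:mufrob1}: the paper argues the sum is increasing in $\mu$ on $(\mu_{\rm F},\infty)$ with limit $p$, whereas you use the equivalent explicit factorization $g(\mu)-p=u\big(\sum_i d_i^2\big)(u-2u^\star)$.
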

\begin{proof}
We have that
\begin{eqnarray}
\frac{\epsilon - \widehat{\gamma}_{1}}{\mu - \widehat{\gamma}_{1}} \left\|   \mu {\bf I} - \widehat{\bf \Sigma}  \right\|_{\rm F}
&=&
\frac{\epsilon - \widehat{\gamma}_{1}}{\mu - \widehat{\gamma}_{1}} 
\sqrt{ p^{-1}\sum_{i=1}^p (\mu - {\widehat{\gamma}}_i)^2 } \NN \\
&=&  \frac{\epsilon - \widehat{\gamma}_{1}}{\sqrt{p}} 
\sqrt{ \frac{\sum_{i=1}^p [(\mu -{\widehat{\gamma}}_1) - ( {\widehat{\gamma}}_i - {\widehat{\gamma}}_1)]^2}{ (\mu - \widehat{\gamma}_{1})^2} } \NN \\
&=&  \frac{\epsilon - \widehat{\gamma}_{1}}{\sqrt{p}} 
\sqrt{\sum_{i=1}^p (\theta t_i - 1)^2 } \NN
\end{eqnarray}
where $\theta = \theta(\mu) = (\mu - {\widehat{\gamma}}_1)^{-1}$ and $t_i = {\widehat{\gamma}}_i - {\widehat{\gamma}}_1$.
The minimum of $h(\theta)  = \sum_{i=1}^p (\theta t_i - 1)^2$ occurs at $\theta = (\sum_{i=1}^p t_i^2)/(\sum_{i=1}^p t_i)$  (equivalently $\mu = \mu_{\rm F}$), and is equal to 
\[
\frac{p\sum_{i=1}^p (t_i - {\bar{t}})^2}{\sum_{i=1}^p t_i^2},
\] 
which proves (\ref{eqn:lemma3-1}). In addition, that $\widehat{\bf \Sigma} \neq c {\bf I}$ ensures that $\sum_{i=1}^p t_i \neq 0$ and $\sum_{i=1}^p t_i^2 \neq 0$. 
To check \eqref{eqn:mufrob1}, observe that each $(\theta t_i - 1)^2$, as a function of $\mu$, is strictly increasing on $(\mu_{\rm F}, \infty)$  and converges to 1 as $\mu \rightarrow \infty$.
\end{proof}
\noindent The assumption $\widehat{\bf \Sigma} \neq c {\bf I}$ implies that the initial estimator should not be trivial.

Finally, as a summary of Lemma \ref{lemma:muspect} and \ref{lemma:mufrob}, we have the following theorem.

\begin{theorem}[Distance between the initial and FSPD estimators]\label{thm:LSPDsumm}
Assume $\widehat{\bf \Sigma}$ and $\epsilon > 0$ are given. Set
\begin{equation}\NN
\alpha^* := \alpha^*(\mu) =
\begin{cases} 1  & \,\mbox{if }\, \widehat{\gamma}_{1} \geq \epsilon \\
	1 - \frac{\epsilon - \widehat{\gamma}_{1}}{\mu - \widehat{\gamma}_{1}} & \,\mbox{if }\,
	\widehat{\gamma}_{1} < \epsilon \end{cases} ~; \qquad
	\mu \in [\mu_{\rm SF}, \infty),
\end{equation}
where $\mu_{\rm SF} := \max \left\{ \mu_{\rm S}, \, \mu_{\rm F} \right\}$ with $\mu_{\rm S}$ and $\mu_{\rm F}$ defined in Lemma \ref{lemma:muspect} and \ref{lemma:mufrob}, respectively. Then
\BNUM
\item $\| {\bf \Phi}_{\mu, \alpha^*}\big(\widehat{\bf \Sigma}\big)
	- \widehat{\bf \Sigma} \|_2$ exactly equals to
	$(\epsilon - \widehat{\gamma}_{1})_+$
	for any $\mu$ ;
\item $\| {\bf \Phi}_{\mu, \alpha^*}\big(\widehat{\bf \Sigma}\big)
	- \widehat{\bf \Sigma} \|_{\rm F}$ is increasing
	 in $\mu$ and bounded between
\[
(\epsilon - \widehat{\gamma}_{1}) 
 \sqrt{\frac{\sum_{i=1}^p ({\widehat{\gamma}}_i - {\overline{\widehat{\gamma}}})^2}{\sum_{i=1}^p ({\widehat{\gamma}}_i - {\widehat{\gamma}}_1)^2} } 
\,\leq\,
\left\| {\bf \Phi}_{\mu, \alpha^*}\Big(\widehat{\bf \Sigma}\Big) 
	- \widehat{\bf \Sigma} \right\|_{\rm F}
\,\leq\, (\epsilon - \widehat{\gamma}_{1})_+.
\]
\ENUM
\end{theorem}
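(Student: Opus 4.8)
The plan is to deduce the theorem directly from Lemmas~\ref{lemma:alpha}--\ref{lemma:mufrob}, splitting on the sign of $\epsilon-\widehat\gamma_{1}$. In the trivial regime $\widehat\gamma_{1}\geq\epsilon$ the prescription sets $\alpha^{*}=1$, so ${\bf \Phi}_{\mu,\alpha^{*}}(\widehat{\bf \Sigma})=\widehat{\bf \Sigma}$ and both the spectral and the scaled-Frobenius distances are identically $0$; since $(\epsilon-\widehat\gamma_{1})_{+}=0$ and the lower bound in item~2 is non-positive, both assertions hold and the monotonicity claim is vacuous.

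Assume henceforth $\widehat\gamma_{1}<\epsilon$, so $(\epsilon-\widehat\gamma_{1})_{+}=\epsilon-\widehat\gamma_{1}>0$. The first step is to note that the theorem's $\alpha^{*}(\mu)=1-\frac{\epsilon-\widehat\gamma_{1}}{\mu-\widehat\gamma_{1}}$ is exactly the minimiser $\frac{\mu-\epsilon}{\mu-\widehat\gamma_{1}}$ of Lemma~\ref{lemma:alpha}, and that every admissible $\mu$ satisfies $\mu\geq\mu_{\rm SF}\geq\mu_{\rm S}\geq\epsilon>\widehat\gamma_{1}$, whence $\alpha^{*}\in[0,1)$ and ${\bf \Phi}_{\mu,\alpha^{*}}(\widehat{\bf \Sigma})$ is a genuine member of the shrinkage family with $\|{\bf \Phi}_{\mu,\alpha^{*}}(\widehat{\bf \Sigma})-\widehat{\bf \Sigma}\|$ equal to the right-hand side of \eqref{objective:2}; a one-line computation also gives $\gamma_{1}({\bf \Phi}_{\mu,\alpha^{*}}(\widehat{\bf \Sigma}))=\epsilon$, so the eigenvalue constraint is active. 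Item~1 is then immediate: since $\mu\geq\mu_{\rm S}$, Lemma~\ref{lemma:muspect} gives that \eqref{objective:2} in the spectral norm has attained its minimum value $\epsilon-\widehat\gamma_{1}$, which equals $(\epsilon-\widehat\gamma_{1})_{+}$.

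For item~2, I would first dispose of the degenerate case $\widehat{\bf \Sigma}=c{\bf I}$ (excluded in Lemma~\ref{lemma:mufrob}; then ${\bf \Phi}_{\mu,\alpha^{*}}(\widehat{\bf \Sigma})=\epsilon{\bf I}$ and the Frobenius distance equals $\epsilon-c$, matching the upper bound, while the lower bound is degenerate), and otherwise invoke Lemma~\ref{lemma:mufrob} twice: the upper bound $\|\cdot\|_{\rm F}\leq\epsilon-\widehat\gamma_{1}=(\epsilon-\widehat\gamma_{1})_{+}$ is \eqref{eqn:mufrob1} applied with $\mu\geq\mu_{\rm SF}\geq\mu_{\rm F}$, and the lower bound is \eqref{eqn:lemma3-1}, which holds for \emph{every} $\mu>\epsilon$ and hence, a fortiori, on $[\mu_{\rm SF},\infty)$. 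Monotonicity in $\mu$ follows by writing, as in the proof of Lemma~\ref{lemma:mufrob}, $\|{\bf \Phi}_{\mu,\alpha^{*}}(\widehat{\bf \Sigma})-\widehat{\bf \Sigma}\|_{\rm F}=\frac{\epsilon-\widehat\gamma_{1}}{\sqrt{p}}\sqrt{h(\theta(\mu))}$ with $h(\theta)=\sum_{i}(\theta t_{i}-1)^{2}$, $t_{i}=\widehat\gamma_{i}-\widehat\gamma_{1}$, $\theta(\mu)=(\mu-\widehat\gamma_{1})^{-1}$: the map $\mu\mapsto\theta(\mu)$ is strictly decreasing and carries $[\mu_{\rm F},\infty)$ onto $(0,\theta^{\star}]$ with $\theta^{\star}$ the vertex of the convex parabola $h$, so $h$ is non-increasing along this image and the composite $\mu\mapsto h(\theta(\mu))$ is non-decreasing on $[\mu_{\rm F},\infty)\supseteq[\mu_{\rm SF},\infty)$ (strictly increasing for $\mu>\mu_{\rm F}$).

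The argument is essentially bookkeeping on top of the three lemmas, and the only step that warrants care --- the ``main obstacle'', such as it is --- is the monotonicity claim: one must argue at the level of the aggregate $h$ via its convexity rather than termwise, since the individual summands $(\theta t_{i}-1)^{2}$ need not be monotone in $\mu$ (the $i=1$ term is constant, and a summand with large $t_{i}$ can first decrease and then increase on $(\mu_{\rm F},\infty)$).
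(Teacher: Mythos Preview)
Your proposal is correct and follows the paper's approach: the paper presents Theorem~\ref{thm:LSPDsumm} simply as ``a summary of Lemma~\ref{lemma:muspect} and \ref{lemma:mufrob}'' without a separate proof, and your argument is precisely that summary, with the case split on $\widehat\gamma_{1}\gtrless\epsilon$ and the degenerate case $\widehat{\bf\Sigma}=c{\bf I}$ made explicit.

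Your handling of the monotonicity claim is in fact sharper than the paper's. In the proof of \eqref{eqn:mufrob1} the paper asserts that \emph{each} summand $(\theta t_i-1)^2$ is strictly increasing in $\mu$ on $(\mu_{\rm F},\infty)$; as you correctly point out, this termwise claim fails (for instance, with $p=3$ and $(\widehat\gamma_1,\widehat\gamma_2,\widehat\gamma_3)=(0,1,2)$ one gets $\mu_{\rm F}=5/3$, and the $i=3$ summand decreases on $(5/3,2)$, while the $i=1$ term is constant). Your argument via the convexity of the aggregate $h(\theta)=\sum_i(\theta t_i-1)^2$ --- noting that $\mu\mapsto\theta(\mu)$ maps $[\mu_{\rm F},\infty)$ decreasingly onto $(0,\theta^\star]$ with $\theta^\star$ the vertex of $h$ --- repairs this gap and yields both the upper bound and the monotonicity on $[\mu_{\rm SF},\infty)\subseteq[\mu_{\rm F},\infty)$ cleanly.
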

From the theorem, for any $\mu$ in $\left[ \mu_{\rm SF}, \infty \right)$,
the distance between the initial estimator $\widehat{\bf \Sigma}$ and its FSPD-updated version
is less than $(\epsilon - \widehat{\gamma}_{1})_+$.
Indeed, the Monte Carlo experiments in Section \ref{sec:simulation} show that the empirical risk of
${\bf \Phi}_{\mu, \alpha^*}\big(\widehat{\bf \Sigma}\big)$ is robust to
the choice of $\mu$. However, $\mu = \mu_{\rm SF}$ would be a preferable choice
if one wants to minimize the Frobenius-norm distance from $\widehat{\bf \Sigma}$ while maintaining the
spectral-norm distance as minimal.
We remark that a special case of the proposed FSPD estimator has been discussed
by a group of researchers (Section 5.2 in \cite{Cai2014b}).  In that study, the authors
propose to modify the initial estimator $\widehat{\bf \Sigma}$ to
$\widehat{\bf \Sigma} + (\epsilon - \widehat{\gamma}_{1}) {\bf I}$. This coincides with
our FSPD procedure with $\alpha = \alpha^*$ and $\mu \rightarrow \infty$.

\subsection{Statistical properties of the FSPD estimator}\label{subsec:convrate}

The convergence rate of
FSPD estimator to the true ${\bf \Sigma}$ is based on the triangle inequality
\begin{eqnarray}
\left\|{\bf \Phi}\Big(\widehat{\bf \Sigma}\Big) - {\bf \Sigma}\right\|
	&\leq&
	\left\|{\bf \Phi}\Big(\widehat{\bf \Sigma}\Big) - \widehat{\bf \Sigma}\right\| +
	\left\|\widehat{\bf \Sigma} - {\bf \Sigma}\right\| \NN \\
&\leq& (\epsilon - \widehat{\gamma}_{1})_+ +
	\left\|\widehat{\bf \Sigma} - {\bf \Sigma}\right\|.
\label{eqn:triangle_inequality}
\end{eqnarray}
To establish convergence rate for the modified estimator, we need to assume 

\begin{center}
\vspace{-0.4cm}
 {\bf (A1)}:  The smallest eigenvalue of $\bf \Sigma$ is such that $\epsilon < {\gamma}_{1}$.
\end{center}

\vspace{-0.4cm}
\noindent The assumption {\bf (A1)} is the same with that assumed  in \cite{Xue2012}. It  means that the cut-point $\epsilon,$ determining the PDness of the covariance matrix estimator, is set as smaller than the smallest eigenvalue of the true covariance matrix. For this choice of $\epsilon$, we claim that the convergence rate of the FSPD estimator is at least equivalent to that of the initial estimator in terms of spectral norm. 

\begin{theorem}\label{prop:LSPDcov1} 
Let $\widehat{\bf \Sigma}$ be any estimator of the true covariance matrix ${\bf \Sigma}$. Suppose 
$\epsilon$ satisfies {\bf (A1)}, $\alpha=\alpha^*$ and $\mu \in [\mu_{\rm SF}, \infty)$, 
where $\alpha^*$ and $\mu_{\rm SF}$
are defined in Theorem \ref{thm:LSPDsumm}. Then,  we have   
\[
\left\| {\bf \Phi}_{\mu, \alpha^*}\Big(\widehat{\bf \Sigma}\Big) - {\bf \Sigma} \right\|_2 \,\leq\,  
	2\left\| \widehat{\bf \Sigma} - {\bf \Sigma} \right\|_2.
\]
\end{theorem}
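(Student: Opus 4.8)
The plan is to start from the triangle-inequality bound already recorded in \eqref{eqn:triangle_inequality}, namely
\[
\left\| {\bf \Phi}_{\mu, \alpha^*}\Big(\widehat{\bf \Sigma}\Big) - {\bf \Sigma} \right\|_2 \,\leq\, (\epsilon - \widehat{\gamma}_{1})_+ + \left\| \widehat{\bf \Sigma} - {\bf \Sigma} \right\|_2 ,
\]
which is valid for the stated choice $\alpha = \alpha^*$, $\mu \in [\mu_{\rm SF},\infty)$ by part 1 of Theorem \ref{thm:LSPDsumm}. So it suffices to show the single scalar inequality $(\epsilon - \widehat{\gamma}_{1})_+ \leq \| \widehat{\bf \Sigma} - {\bf \Sigma} \|_2$; adding $\|\widehat{\bf \Sigma}-{\bf \Sigma}\|_2$ to both sides then finishes the argument.

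To prove that scalar inequality I would split into two cases. If $\widehat{\gamma}_{1} \geq \epsilon$, then $(\epsilon - \widehat{\gamma}_{1})_+ = 0$ and the inequality is immediate. If $\widehat{\gamma}_{1} < \epsilon$, then assumption {\bf (A1)} gives $\widehat{\gamma}_{1} < \epsilon < \gamma_{1}$, hence
\[
0 \,<\, \epsilon - \widehat{\gamma}_{1} \,<\, \gamma_{1} - \widehat{\gamma}_{1} \,=\, |\gamma_{1} - \widehat{\gamma}_{1}| .
\]
The remaining step is to bound $|\gamma_{1} - \widehat{\gamma}_{1}|$ by $\|\widehat{\bf \Sigma} - {\bf \Sigma}\|_2$, which is exactly Weyl's perturbation inequality for the ordered eigenvalues of symmetric matrices: $\max_i |\gamma_i(\widehat{\bf \Sigma}) - \gamma_i({\bf \Sigma})| \leq \|\widehat{\bf \Sigma} - {\bf \Sigma}\|_2$. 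Applying it with $i=1$ yields $(\epsilon - \widehat{\gamma}_{1})_+ < \|\widehat{\bf \Sigma} - {\bf \Sigma}\|_2$, completing the case analysis.

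There is essentially no serious obstacle here; the only ingredient beyond what is stated in the excerpt is Weyl's eigenvalue inequality, which is standard. The one point worth stating carefully is that {\bf (A1)} is used precisely to convert the gap $\epsilon - \widehat{\gamma}_1$ (which is what the FSPD distance costs, by Theorem \ref{thm:LSPDsumm}) into the estimation-error gap $\gamma_1 - \widehat{\gamma}_1$, after which eigenvalue stability does the rest. If desired, the same argument shows the slightly sharper statement that the factor $2$ can be replaced by $1 + (\epsilon-\widehat{\gamma}_1)_+/\|\widehat{\bf \Sigma}-{\bf \Sigma}\|_2$, but the clean constant $2$ is obtained simply by bounding that ratio by $1$.
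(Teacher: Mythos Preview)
Your proposal is correct and follows essentially the same route as the paper's own proof: apply the triangle-inequality bound \eqref{eqn:triangle_inequality}, use {\bf (A1)} to replace $\epsilon$ by $\gamma_1$ in $(\epsilon-\widehat{\gamma}_1)_+$, and invoke Weyl's perturbation inequality to bound $(\gamma_1-\widehat{\gamma}_1)_+$ by $\|\widehat{\bf \Sigma}-{\bf \Sigma}\|_2$. Your explicit case split on the sign of $\epsilon-\widehat{\gamma}_1$ is a minor expository addition, but the argument is otherwise identical.
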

\begin{proof}
For $\epsilon$ satisfying {\bf (A1)}, we have
\begin{equation}\label{epsbound}
(\epsilon - \widehat{\gamma}_{1})_+ \,\leq\, ({\gamma}_{1} - \widehat{\gamma}_{1})_+
\,\leq\, \left\|\widehat{\bf \Sigma} - {\bf \Sigma}\right\|_2.
\end{equation}
 The first inequality in (\ref{epsbound})
is simply from $\epsilon \leq {\gamma}_{1}$ and  the second inequality follows from the Weyl's perturbation inequality, which states that if  ${\bf A}$ and ${\bf B}$  are general symmetric 
matrices, then 
\[
\max_i \{ \gamma_i({\bf A}) - \gamma_i({\bf B}) \} \leq \left\|{\bf A} - {\bf B}\right\|_2.
\]
Combining \eqref{eqn:triangle_inequality} and \eqref{epsbound} completes the proof.
\end{proof}

We remark that 
the arguments made for Theorem \ref{prop:LSPDcov1} are completely deterministic and there are no assumptions on how the true covariance matrix is structured or on how the estimator is defined. Thus, the convergence rate of FSPD estimator can be easily adapted to any given initial estimator.
In terms of spectral norm, Theorem \ref{prop:LSPDcov1} shows that the FSPD estimator has convergence rate at least equivalent to that of the initial estimator, which implies that  ${\bf \Phi}_{\mu, \alpha^*}\Big(\widehat{\bf \Sigma}\Big)$ is a consistent or minimax rate-optimal estimator if $\widehat{\bf \Sigma}$ is. Examples include the banding or tapering estimator \citep{Cai2010},  the adaptive block thresholding estimator \citep{Cai2012c}, the universal thresholding estimator \citep{Cai2012f},  and the adaptive thresholding estimator \citep{Cai2011b}.

To discuss the convergence rate of the FSPD estimator in Frobenius norm, we further assume
\[
{\bf (A2)}: \quad \frac{  |{\widehat{\gamma}}_1 - \gamma_1 |}{ \sqrt{\sum_{i=1}^p ({\widehat{\gamma}}_i - \gamma_i)^2 / p} } = O_p ( 1). 
\]
The assumption {\bf (A2)} implies that, in the initial estimator, the estimation error of the smallest eigenvalue
has the same asymptotic order with the averaged mean-squared error over all eigenvalues. 
With the additional assumption {\bf (A2)}, we have the following theorem.
\begin{theorem}\label{prop:LSPDcov2} 
Let $\widehat{\bf \Sigma}$ be any estimator of the true covariance matrix ${\bf \Sigma}$. Suppose 
$\alpha=\alpha^*$ and $\mu \in [\mu_{\rm SF}, \infty)$, where $\alpha^*$ and $\mu_{\rm SF}$
are defined in Theorem \ref{thm:LSPDsumm}. Then, under the assumptions {\bf (A1)} and {\bf (A2)}, 
we have 
\[
\left\| {\bf \Phi}_{\mu, \alpha^*}\Big(\widehat{\bf \Sigma}\Big) - {\bf \Sigma} \right\|_{\rm F} \,=\,  
	\Big( 1 + O_p(1) \Big) \cdot
	\left\| \widehat{\bf \Sigma} - {\bf \Sigma} \right\|_{\rm F}.
\]
\end{theorem}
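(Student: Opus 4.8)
The plan is to bound the Frobenius-norm distance $\|{\bf \Phi}_{\mu,\alpha^*}(\widehat{\bf \Sigma}) - {\bf \Sigma}\|_{\rm F}$ from above and below by $(1 + o_p(1))\|\widehat{\bf \Sigma} - {\bf \Sigma}\|_{\rm F}$ times a factor that is $O_p(1)$, using the triangle inequality in both directions together with the already-established upper bound on the modification distance. First I would invoke the upper half of part 2 of Theorem \ref{thm:LSPDsumm}, which gives $\|{\bf \Phi}_{\mu,\alpha^*}(\widehat{\bf \Sigma}) - \widehat{\bf \Sigma}\|_{\rm F} \leq (\epsilon - \widehat{\gamma}_1)_+$, and then the triangle inequality \eqref{eqn:triangle_inequality} to get
\[
\left\| {\bf \Phi}_{\mu, \alpha^*}\Big(\widehat{\bf \Sigma}\Big) - {\bf \Sigma} \right\|_{\rm F} \,\leq\, (\epsilon - \widehat{\gamma}_1)_+ + \left\| \widehat{\bf \Sigma} - {\bf \Sigma} \right\|_{\rm F}.
\]
The lower bound follows from the reverse triangle inequality, $\|{\bf \Phi}_{\mu,\alpha^*}(\widehat{\bf \Sigma}) - {\bf \Sigma}\|_{\rm F} \geq \|\widehat{\bf \Sigma} - {\bf \Sigma}\|_{\rm F} - (\epsilon - \widehat{\gamma}_1)_+$.

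The key step is then to show $(\epsilon - \widehat{\gamma}_1)_+ = O_p(1)\cdot \|\widehat{\bf \Sigma} - {\bf \Sigma}\|_{\rm F}$. Under {\bf (A1)} we have $(\epsilon - \widehat{\gamma}_1)_+ \leq (\gamma_1 - \widehat{\gamma}_1)_+ \leq |\widehat{\gamma}_1 - \gamma_1|$, exactly as in \eqref{epsbound}. Now I would divide and multiply by the Frobenius distance between the eigenvalue vectors: writing $D := \sqrt{\sum_{i=1}^p (\widehat{\gamma}_i - \gamma_i)^2 / p}$, assumption {\bf (A2)} says precisely that $|\widehat{\gamma}_1 - \gamma_1| = O_p(1)\cdot D$. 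The final ingredient is the Hoffman--Wielandt inequality (or, since both matrices are symmetric, the Wielandt--Hoffman theorem), which gives $D \leq \|\widehat{\bf \Sigma} - {\bf \Sigma}\|_{\rm F}$ with the scaled Frobenius norm as defined in the Notations. Chaining these yields $(\epsilon - \widehat{\gamma}_1)_+ = O_p(1)\cdot \|\widehat{\bf \Sigma} - {\bf \Sigma}\|_{\rm F}$, and substituting into both the upper and lower bounds gives
\[
\left(1 - O_p(1)\right)\left\| \widehat{\bf \Sigma} - {\bf \Sigma} \right\|_{\rm F} \,\leq\, \left\| {\bf \Phi}_{\mu, \alpha^*}\Big(\widehat{\bf \Sigma}\Big) - {\bf \Sigma} \right\|_{\rm F} \,\leq\, \left(1 + O_p(1)\right)\left\| \widehat{\bf \Sigma} - {\bf \Sigma} \right\|_{\rm F},
\]
which is the claimed statement (the $O_p(1)$ term being the same on both sides, the two-sided bound collapses to the stated equality $\|{\bf \Phi}_{\mu,\alpha^*}(\widehat{\bf \Sigma}) - {\bf \Sigma}\|_{\rm F} = (1 + O_p(1))\|\widehat{\bf \Sigma} - {\bf \Sigma}\|_{\rm F}$).

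The main obstacle, such as it is, is essentially bookkeeping rather than any genuine difficulty: one must be careful that the $O_p(1)$ notation in the statement is interpreted as "bounded in probability" (so the conclusion is that the ratio of the two Frobenius errors is bounded in probability, not that it tends to $1$), and one must make sure the scaling factor $1/p$ in the definition of $\|\cdot\|_{\rm F}$ is consistent between the Hoffman--Wielandt bound and assumption {\bf (A2)} — both are written with the $1/p$ normalization, so they match. A minor point worth stating explicitly is that if $\widehat{\bf \Sigma} = c{\bf I}$ for some scalar $c$ then Lemma \ref{lemma:mufrob} does not apply, but in that degenerate case $\mu_{\rm F}$ is undefined and the modification $\widehat{\bf \Sigma} + (\epsilon - \widehat{\gamma}_1)_+{\bf I}$ can be used directly, for which $\|{\bf \Phi}(\widehat{\bf \Sigma}) - \widehat{\bf \Sigma}\|_{\rm F} = (\epsilon - \widehat{\gamma}_1)_+$ still holds, so the argument goes through unchanged.
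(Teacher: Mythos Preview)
Your proposal is correct and follows essentially the same route as the paper: triangle inequality, then {\bf (A1)} to replace $\epsilon$ by $\gamma_1$, then {\bf (A2)} to pass to $\sqrt{\sum_i(\widehat{\gamma}_i-\gamma_i)^2/p}$, then Wielandt--Hoffman to bound this by $\|\widehat{\bf \Sigma}-{\bf \Sigma}\|_{\rm F}$. The only difference is that you also supply the lower bound via the reverse triangle inequality, whereas the paper's proof establishes only the upper bound $\|{\bf \Phi}_{\mu,\alpha^*}(\widehat{\bf \Sigma})-{\bf \Sigma}\|_{\rm F}\le(1+O_p(1))\|\widehat{\bf \Sigma}-{\bf \Sigma}\|_{\rm F}$; your two-sided argument is slightly more complete given that the statement is written with an equality.
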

\begin{proof}
Under the assumptions {\bf (A1)} and {\bf (A2)}, the inequality \eqref{eqn:triangle_inequality} shows 
\begin{eqnarray}
\left\| {\bf \Phi}_{\mu, \alpha^*}\Big(\widehat{\bf \Sigma}\Big) - {\bf \Sigma} \right\|_{\rm F} &\leq&
	(\epsilon - {\widehat{\gamma}}_1)_+  +  \left\| \widehat{\bf \Sigma} - {\bf \Sigma} \right\|_{\rm F}\nonumber\\
 	&\leq& (\gamma_1 - {\widehat{\gamma}}_1)_+  +  \left\| \widehat{\bf \Sigma} - {\bf \Sigma} \right\|_{\rm F} \nonumber \\
&=&  O_p (1) \sqrt{\sum_{i=1}^p ({\widehat{\gamma}}_i - \gamma_i)^2 / p}+  \left\| \widehat{\bf \Sigma} - {\bf \Sigma}
\right\|_{\rm F} \nonumber \\
&\le & \Big( 1 + O_p(1) \Big)    \left\| \widehat{\bf \Sigma} - {\bf \Sigma} \label{eqn:theorem3-1}
\right\|_{\rm F}
\end{eqnarray}
where the last inequality (\ref{eqn:theorem3-1}) is from Wielandt-Hoffman inequality: for any  symmetric $\bf A$ and $\bf B$,
\[
\sum_{i=1}^{p} ( \gamma_i({\bf A}) - \gamma_i({\bf B}) )^2 \,\leq\, p \left\|{\bf A} - {\bf B}\right\|_{\rm F}^2.
\]
\end{proof}

We remark that the inequality $\left\| {\bf \Phi}_{\mu, \alpha^*}\Big(\widehat{\bf \Sigma}\Big) - \widehat{{\bf \Sigma}} \right\|_{\rm F} \leq (\epsilon - {\widehat{\gamma}}_1)_+ $ can be very conservative. When $\mu$ is close to $\mu_{\rm F}$, 
Theorem~\ref{thm:LSPDsumm} implies 
\[
\left\| {\bf \Phi}_{\mu, \alpha^*}\Big(\widehat{\bf \Sigma}\Big) - \widehat{{\bf \Sigma}} \right\|_{\rm F} \approx (\epsilon - {\widehat{\gamma}}_1)_+  \sqrt{\frac{\sum_{i=1}^p ({\widehat{\gamma}}_i - {\overline{\widehat{\gamma}}})^2}{\sum_{i=1}^p ({\widehat{\gamma}}_i - {\widehat{\gamma}}_1)^2} },
\] 
where $\overline{\widehat{\gamma}}=\sum_{i=1}^p \widehat{\gamma}_i \big/p$. Given a fixed sequence of ${\widehat{\gamma}}_i$,  the term ${\sum_{i=1}^p ({\widehat{\gamma}}_i - {\overline{\widehat{\gamma}}})^2}\big/{\sum_{i=1}^p ({\widehat{\gamma}}_i - {\widehat{\gamma}}_1)^2}$ 
could be considerably smaller than one. In sequel, the Frobenius norm $\left\| {\bf \Phi}_{\mu, \alpha^*}\Big(\widehat{\bf \Sigma}\Big) - {\widehat{\bf \Sigma}} \right\|_{\rm F}$ becomes much smaller than $(\epsilon - {\widehat{\gamma}}_1)_+$, 
and possible comparable to $\left\| {\bf \Sigma} - {\widehat{\bf \Sigma}} \right\|_{\rm F}.$  Thus, the assumption
{\bf (A2)} is only a sufficient condition, and the conclusion of Theorem \ref{prop:LSPDcov2} could be true in much 
more generality.

\subsection{Computation}\label{subsec:computation}
The proposed FSPD estimator has by itself a great computational advantage over the existing estimators
that incorporate both structural regularization and PDness constraint in optimization problem
\citep{Rothman2012,Xue2012,Liu2014}. It is primarily because the FSPD procedure is optimization-free
and runs eigenvalue decomposition only once. Moreover, in practice, it can be implemented much more quickly without eigenvalue decomposition.
Recall that the FSPD estimator depends on the four functionals of $\widehat{\bf \Sigma}$: $\widehat{\gamma}_{p}$, $\widehat{\gamma}_{1}$, 
$\sum_{i=1}^p ({\widehat{\gamma}}_i - {\overline{\widehat{\gamma}}})^2$ and $\sum_{i=1}^p ({\widehat{\gamma}}_i - {\widehat{\gamma}}_1)^2$ which in turn can be written in terms of 
$\widehat{\gamma}_{p}$, $\widehat{\gamma}_{1}$, $\overline{\widehat{\gamma}}$ 
and ${\rm V}(\widehat{\gamma}) := \sum_{i=1}^p ({\widehat{\gamma}}_i - {\overline{\widehat{\gamma}}})^2$. 
For $\widehat{\gamma}_{p}$ and $\widehat{\gamma}_{1}$, the largest and
smallest eigenvalue of a symmetric matrix
can be independently calculated from the whole spectrum by the Krylov subspace method
(see Chapter 7 of \citealp{Demmel1997} or Chapter 10 of \citealp{Golub2012} for examples), which
 is faster than the usual eigenvalue decomposition of large-scale sparse matrices. Some of these
 algorithms are implemented using public and commercial software; for example, the built-in MATLAB function
 \texttt{eigs()}, which is based
 on \citealp{Lehoucq1996,Sorensen1990}, or the user-defined MATLAB function \texttt{eigifp()}
 \citep{Golub2002}, which is available online from the authors' homepage.\footnote{During the simulation, we used
 \texttt{eigifp()} instead of \texttt{eigs()}, as \texttt{eigs()} failed to converge in the
 smallest eigenvalue computations for some matrices.}
The other functionals are  written as
\begin{eqnarray}
{\overline{\widehat{\gamma}}} &=& \sum_i \widehat{\gamma}_i / p = \tr(\widehat{\bf \Sigma}) / p \NN \\
{\rm V}(\widehat{\gamma}) &=& \big(\sum_i \widehat{\gamma}_i^2 / p \big) - {\rm M}(\widehat{\gamma})^2 =
\tr(\widehat{\bf \Sigma}^2)/p - \{\tr(\widehat{\bf \Sigma}) /p \}^2, \NN
\end{eqnarray}
and can
be evaluated without the computation of the entire spectrum.

\section{Simulation study}\label{sec:simulation}

We numerically compare the finite sample performances of the
FSPD estimator and other existing estimators. For the comparison, we
use the soft thresholding estimator with universal threshold in (\ref{estimator:thr}) as
 initial regularized estimator. We show that the FSPD estimator induced by the soft thresholding
 estimator is comparable in performance to the existing PD sparse covariance matrix estimators
 and is computationally much faster.

\subsection{Empirical risk}\label{subsec:emperrcomp}

We generate samples of $p$-variate random vectors
${\bf x}_1, \ldots, {\bf x}_n$ from (i) multivariate normal distribution or (ii) multivariate $t$ distribution
 with $5$
degrees of freedom. We consider the following matrices as true ${\bf \Sigma}$:

\BNUM
\item (``Linearly tapered Toeplitz'' matrix) $[{\bf M}_1]_{ij} := \left( 1 - \frac{|i-j|}{10} \right)_+$ (when $[{\bf A}]_{ij}$ denotes
the $(i,j)$-th element of $\bf A$);
\item (``Overlapped block-diagonal'' matrix) $[{\bf M}_2]_{ij} := I(i = j) +  0.4 \, I \big( (i,j) \in (I_k \cup \{i_k + 1\}) \times (I_k \cup \{i_k + 1\}) \big)$, where the row (column) index $\{1, 2, \ldots, p\}$ is partitioned into $K := p/20$ subsets, which are non-overlapping and of equal size, and $i_k$ denotes the maximum index in $I_k$.
\ENUM
We generate 100 datasets for each of the 36 possible combinations of distribution $\in \{$multivariate
normal, multivariate-$t\}$, $\bf \Sigma \in \{{\bf M}_1, {\bf M}_2\}$, $n \in \{100,200,400\}$, and $p \in \{100,200,400\}$.

We first compute the sample covariance matrix ${\bf S} : = \frac{1}{n-1}\sum_{i=1}^{n} ({\bf x}_i - \bar{\bf x}) ({\bf x}_i - \bar{\bf x})^{\top}$,
where $\bar{\bf x} := \frac{1}{n}\sum_{i=1}^{n}{\bf x}_i$, and
the soft thresholding estimator $\widehat{\bf \Sigma}^{\rm Soft}({\lambda^*})$, where the optimal threshold
 $\lambda^*$ is chosen from the five-fold cross-validation (CV) introduced in \cite{Bickel2008a}.
The candidate set of $\lambda$ is 
 $\big\{ \frac{k}{100} \cdot \max_{i < j} |s_{ij}| \,:\, k = 0, 1,\cdots, 100 \big\}$, a grid search from zero to the maximum of sample covariances.

First, we investigate the spectra of the soft thresholding estimator to understand
how often and in what magnitude it violates PDness. Table \ref{table:softspectrum}
summarizes the negative eigenvalues of the soft thresholding estimates from
the simulated datasets. For both ${\bf M}_1$ and ${\bf M}_2$, we find
that the soft thresholding estimator easily lacks PDness.
In addition, the frequency of non-PDness increases as $p$ increases.

\begin{table}[htb!]
{\footnotesize
\begin{center}
\begin{tabular}{cc|ccc|ccc}
\hline
& & \multicolumn{3}{c|}{${\bf M}_1$: Tapered}& \multicolumn{3}{c}{${\bf M}_2$: Overlap. block diag.} \\
$p$ & & Min. eig. & \#(Neg. eig.)/$p$ & \#(PD) & Min. eig. & \#(Neg. eig.)/p & \#(PD)  \\
\hline
 100 & $\mathcal{N}$  & -0.035 (0.004) & 0.017 (0.001) & 16/100 & 0.257 (0.003) & 0.000 (0.000) & 100/100 \\
     & $t$  & -0.066 (0.006) & 0.020 (0.001) & 9/100 & 0.292 (0.012) & 0.000 (0.000) & 100/100 \\
\hline
 200 & $\mathcal{N}$  & -0.061 (0.003) & 0.017 (0.001) & 2/100 & 0.138 (0.003) & 0.000 (0.000) & 100/100 \\
     & $t$  & -0.114 (0.020) & 0.018 (0.001) & 4/100 & 0.133 (0.018) & 0.001 (0.000) & 87/100 \\
\hline
 400 & $\mathcal{N}$  & -0.086 (0.003) & 0.016 (0.001) & 0/100 & -0.039 (0.003) & 0.006 (0.000) & 7/100 \\
     & $t$  & -0.270 (0.026) & 0.017 (0.001) & 0/100 & -0.150 (0.026) & 0.014 (0.001) & 7/100 \\
\hline
\end{tabular}
\caption{Non-PDness of the soft thresholding estimator. The columns give the minimum eigenvalue (Min. eig),
the proportion of negative eigenvalues (\#(Neg. eig.)/$p$), and the number of cases that the estimates are PD (\#(PD)) over
100 replications.}
\label{table:softspectrum}
\end{center}
}
\end{table}
In order to compare the empirical risks, we consider the following four PD covariance matrix estimators:
\BNUM
\item (``FSPD($\mu_{\rm SF}$)'') FSPD estimator ${\bf \Phi}_{\mu, \alpha^*}\big(
	\widehat{\bf \Sigma}^{\rm Soft}(\lambda^*)\big)$ with $\mu = \mu_{\rm SF}$
\item (``FSPD($\infty$)'') FSPD estimator ${\bf \Phi}_{\mu, \alpha^*}\big(
	\widehat{\bf \Sigma}^{\rm Soft}(\lambda^*)\big)$ with $\mu = \infty$
\item (``EigCon'') \cite{Xue2012}'s eigenvalue constraint estimator,
	$\widehat{\bf \Sigma}^{\rm EigCon}(\lambda^*)$ in (\ref{estimator:EigCon})
\item (``log-det'') \cite{Rothman2012}'s log-determinant barrier estimator
	$\widehat{\bf \Sigma}^{\rm logdet}(\lambda^*)$ in (\ref{estimator:logdet})
\ENUM
In applying the respective methods, we set $\epsilon = 10^{-2}$ for the two FSPD estimators and the eigenvalue constraint estimator and $\tau = 10^{-2}$ for the log-determinant barrier estimator.

Table \ref{table:emperrcomp} lists the empirical risks measured by three popular matrix norms (the matrix $l_1$, spectral,
and unscaled Frobenius norms)
for the four estimators considered. The table does not include the results for $p=$ 100, 200 of ${\bf M}_2$, because the initial soft thresholding estimates are mostly PD and empirical risks  
are almost identical across the considered estimators.

Now, we compare the empirical risks between the estimators. For the soft thresholding and FSPD approaches,
FSPD($\mu_{\rm SF}$) has lower risks (within 4\%) than the soft thresholding estimates in the matrix $l_1$ and
spectral norms and higher risks (within 4\%) in the Frobenius norm. The risks of FSPD($\infty$) are lower than
those of soft thresholding estimates in the spectral norm and higher in both the matrix $l_1$ and Frobenius norms;
however, the increase in risk does not exceed 4\%. FSPD($\mu_{\rm SF}$) demonstrates up to 2\% greater empirical
risk than FSPD($\infty$) in all the three norms for normally distributed datasets but lower empirical risk
 (within 5\%) for $t$-distributed datasets.

A comparison of the FSPD estimator and the two optimization-based estimators show that in all simulated cases, FSPD($\mu_{\rm SF}$) has
lower risks compared to both the eigenvalue constraint and log-determinant methods in the matrix $l_1$ and spectral norms,
but higher risks in the Frobenius norm; however, the difference never exceeds 4\%. In addition, FSPD($\infty$) has
a higher risk than the two optimization-based methods in both the matrix $l_1$ and Frobenius norms, except for ${\bf M}_2$
 under $p=400$, and lower risk in spectral norm. Similarly, the difference of risks are less than 4\%.

In summary, the empirical risks associated with the inspected estimators are different within
approximately 4\%; furthermore, the standard errors of these risks show that the risks of the other methods
 are within the confidence interval of each of the estimators. 
Therefore, we can conclude that the empirical
errors of the proposed FSPD estimators are comparable to the errors of the soft thresholding estimator
as well as the two optimization-based PD sparse estimators when the error is measured by the matrix $l_1$, spectral,
and Frobenius norms.

\begin{table}[htb!]
{\footnotesize
\begin{center}
\begin{tabular}{c|ccc|ccc}
\hline
 & \multicolumn{3}{c|}{ Multivariate normal } &
 	\multicolumn{3}{c}{ Multivariate $t$  }\\
\hline
 & Matrix $l_1$ & Spectral & Frobenius & Matrix $l_1$ & Spectral & Frobenius \\
\hline
\multicolumn{7}{c}{${\bf M}_1$, $p = 100$} \\
Soft thres. & 6.21 (0.11)  & 3.59 (0.05)  & 7.18 (0.07) & 9.20 (0.35)  & 5.06 (0.12)  & 10.37 (0.17) \\
 FSPD($\mu_{\rm SF}$) & 6.20 (0.11)  & 3.59 (0.05)  & 7.25 (0.07) & 9.12 (0.34)  & 5.04 (0.12)  & 10.45 (0.17) \\
 FSPD($\infty$) & 6.20 (0.11)  & 3.56 (0.05)  & 7.21 (0.07) & 9.23 (0.35)  & 5.04 (0.13)  & 10.41 (0.17) \\
 EigCon & 6.21 (0.11)  & 3.59 (0.05)  & 7.18 (0.07) & 9.19 (0.34)  & 5.06 (0.12)  & 10.37 (0.17) \\
 log-det  & 6.21 (0.11)  & 3.59 (0.05)  & 7.22 (0.06) & 9.18 (0.34)  & 5.06 (0.12)  & 10.40 (0.17) \\
\multicolumn{7}{c}{${\bf M}_1$, $p = 200$} \\
Soft thres. & 7.08 (0.08)  & 4.24 (0.04)  & 11.35 (0.06) & 13.40 (0.77)  & 6.25 (0.19)  & 17.18 (0.42) \\
 FSPD($\mu_{\rm SF}$) & 7.06 (0.08)  & 4.23 (0.04)  & 11.54 (0.05) & 13.12 (0.72)  & 6.18 (0.18)  & 17.49 (0.43) \\
 FSPD($\infty$) & 7.05 (0.08)  & 4.16 (0.04)  & 11.40 (0.05) & 13.51 (0.78)  & 6.24 (0.21)  & 17.40 (0.45) \\
 EigCon & 7.08 (0.08)  & 4.23 (0.04)  & 11.35 (0.06) & 13.29 (0.74)  & 6.25 (0.19)  & 17.18 (0.42) \\
 log-det  & 7.07 (0.08)  & 4.23 (0.04)  & 11.41 (0.05) & 13.28 (0.74)  & 6.25 (0.19)  & 17.23 (0.42) \\
\multicolumn{7}{c}{${\bf M}_1$, $p = 400$} \\
Soft thres. & 7.91 (0.08)  & 4.72 (0.03)  & 17.75 (0.06) & 19.34 (1.12)  & 7.47 (0.30)  & 25.94 (0.40) \\
 FSPD($\mu_{\rm SF}$) & 7.86 (0.07)  & 4.71 (0.03)  & 18.14 (0.06) & 18.58 (1.02)  & 7.28 (0.28)  & 26.92 (0.51) \\
 FSPD($\infty$) & 7.93 (0.08)  & 4.62 (0.03)  & 17.86 (0.06) & 19.60 (1.15)  & 7.57 (0.34)  & 26.87 (0.57) \\
 EigCon & 7.90 (0.08)  & 4.72 (0.03)  & 17.74 (0.06) & 18.92 (1.05)  & 7.45 (0.30)  & 25.92 (0.39) \\
 log-det  & 7.88 (0.08)  & 4.72 (0.03)  & 17.84 (0.06) & 18.90 (1.05)  & 7.45 (0.30)  & 25.99 (0.39) \\
\hline


\multicolumn{7}{c}{${\bf M}_2$, $p = 400$} \\
Soft thres. & 11.77 (0.10)  & 5.64 (0.05)  & 19.29 (0.07) & 19.67 (0.58)  & 7.44 (0.08)  & 27.91 (0.37) \\
 FSPD($\mu_{\rm SF}$) & 11.78 (0.10)  & 5.62 (0.05)  & 19.39 (0.07) & 19.10 (0.52)  & 7.31 (0.07)  & 28.23 (0.38) \\
 FSPD($\infty$) & 11.73 (0.10)  & 5.59 (0.05)  & 19.32 (0.07) & 19.83 (0.60)  & 7.39 (0.09)  & 28.32 (0.40) \\
 EigCon & 11.77 (0.10)  & 5.64 (0.05)  & 19.28 (0.07) & 19.34 (0.54)  & 7.44 (0.08)  & 27.89 (0.37) \\
 log-det  & 11.75 (0.10)  & 5.63 (0.05)  & 19.23 (0.07) & 19.31 (0.54)  & 7.43 (0.08)  & 27.86 (0.38) \\
\hline
\end{tabular}
\caption{Empirical risks of the PD covariance matrix esimators and soft thresholding estimator based on 100 replications. Standard errors are presented in parenthesis.}
\label{table:emperrcomp}
\end{center}
}
\end{table}
\subsection{Computation time}\label{subsec:timecomp}

We now numerically show the proposed FSPD estimator is much faster and simpler than optimization-based 
PD sparse estimators. We record the computation time of the four PD covariance matrix estimators  listed in Section \ref{subsec:emperrcomp} as well as the soft thresholding estimator for $n=100$ and $p=400,1200,3600$.
The distribution is multivariate normal with the true covariance matrix
${\bf \Sigma}={\bf M}_1$ and ${\bf M}_2$. Here, $\lambda^*$ is selected as in
Section \ref{subsec:emperrcomp}. The calculations in this section are performed using MATLAB running
on a computer with an Intel Core i7 CPU (3.4 GHz) and 16 GB RAM.
The two optimization-based estimators - the eigenvalue
 constraint estimator and the log-determinant barrier estimator - are solved iteratively, and the convergence criteria is set as
  $\|\widehat{\bf \Sigma}^{\rm (New)} - \widehat{\bf \Sigma}^{\rm (Old)}\| / \|\widehat{\bf\Sigma}^{\rm (Old)}\| < 10^{-7}$.

The results are summarized in Table \ref{table:timecomp}. It is not very surprising that the two FSPD estimates are
calculated extremely faster than the eigenvalue constraint and the log-determinant barrier estimates, because
 both the latter methods require iterative computations of $O(p^3)$ flops as noted in Sections \ref{subsec:optPD}
 and \ref{subsec:computation}.
In addition, FSPD($\infty$) is always faster than FSPD($\mu_{\rm SF}$), since
FSPD($\mu_{\rm SF}$) computes $\widehat{\gamma}_{1}$, $\widehat{\gamma}_{p}$, $\overline{\widehat{\gamma}}$,
 and ${\rm V}(\widehat{\gamma})$ whereas FSPD($\infty$) calculates $\widehat{\gamma}_{1}$ only.
We also note that the empirical risks from both FSPD($\mu_{\rm SF}$) and FSPD($\infty$) are
 consistent even when $n=100$ and $p=1200,3600$, which is omitted here to save space.

\begin{table}[htb!]
\begin{center}
\begin{tabular}{c|ccc|ccc}
\hline
&  \multicolumn{3}{c|}{${\bf M}_1$: Tapered}& \multicolumn{3}{c}{${\bf M}_2$: Overlap. block diag.} \\
 & $p=400$ & $p=1200$ & $p=3600$ & $p=400$ & $p=1200$ & $p=3600$ \\
 \hline
Soft thres. & 0.00  & 0.02  &0.23  &0.00 &0.03 &0.24  \\
FSPD($\mu_{\rm SF}$) & 0.01 & 0.12  & 0.66 & 0.01 & 0.13 & 0.82 \\
FSPD($\infty$) & 0.01 & 0.09  & 0.50 & 0.01 & 0.09 & 0.58 \\
EigCon  & 4.93 & 190.68  & 7757.47 & 2.42 & 106.13 & 4470.47 \\
log-det & 2.33 & 99.14  & 3157.80 & 2.30 & 97.28 & 3156.08 \\
\hline
\end{tabular}
\caption{Computation time of the four PD estimators and the soft thresholding
estimator.}
\label{table:timecomp}
\end{center}
\end{table}

In summary, the FSPD estimator induced by the soft thresholding estimator has empirical risk 
comparable to the two existing PD sparse covariance matrix estimators, but it is computationally 
much faster and simpler than them.

\section{Two applications}\label{sec:plug-in}

In this section, we apply the FSPD approach to two statistical procedures from the literature,
linear minimax classification and Markowitz's portfolio allocation, which require PD estimation
of covariance matrix. Both applications are illustrated with real data examples.
The linear minimax classifier is illustrated with an example from speech recognition  \citep{Tsanas2014}, and
Markowitz's portfolio allocation is illustrated with a Dow Jones stock return example \citep{Won2013}.

\subsection{Liniar minimax classifier applied to speech recognition}

\subsubsection{Linear minimax probability machine}

The \emph{linear minimax probability machine} (LMPM) proposed by \cite{Lanckriet2002} constructs a
linear binary classifier with no distributional assumptions given the mean vector and covariance matrix.
Let $({\boldsymbol \mu}_0, {\bf \Sigma}_0)$ and $({\boldsymbol \mu}_1, {\bf \Sigma}_1)$ be
a pair of the population mean vector and
 covariance matrix for group 0 and 1, respectively.
The LMPM finds a separating hyperplane that minimizes the maximum probability of misclassification over all distributions
for given ${\boldsymbol \mu}$s and ${\bf \Sigma}$s:

\begin{eqnarray}
\max_{\alpha, {\bf a} \neq 0, b} &\mbox{s.t.}&
	\inf_{ {\bf x} \sim ({\boldsymbol \mu}_0, {\bf \Sigma}_0) }
	P\left\{ {\bf a}^{\top}{\bf x} \leq b \right\} \geq \alpha \label{eqn:linearMPM} \\
&&
	\inf_{ {\bf y} \sim ({\boldsymbol \mu}_1, {\bf \Sigma}_1) }
	P\left\{ {\bf a}^{\top}{\bf x} \geq b \right\} \geq \alpha \NN
\end{eqnarray}
The solution to (\ref{eqn:linearMPM}) does not have a closed-form formula and should be numerically
computed. The PDness of the covariance matrices ${\bf \Sigma}_0$ and ${\bf \Sigma}_1$ (or their estimators)
is a necessary condition for both the convexity of the problem and the existence of
a convergent algorithm to the solution.

In practice, the true ${\boldsymbol \mu}$s and ${\boldsymbol \Sigma}$s are unknown and we must plug their
estimators into the LMPM formulation (\ref{eqn:linearMPM}).
\cite{Lanckriet2002} uses sample mean and covariance matrices when they are well defined. In case
the sample covariance matrix ${\bf S}$ is singular, they suggest using ${\bf S} + \delta {\bf I}$ with a given constant
$\delta$ rather than ${\bf S}$.

\subsubsection{Example: Speech recognition}\label{subsubsection:speech}

We illustrate the performance of the LMPM with various PD covariance matrix estimators using
voice data \citep{Tsanas2014}\footnote{The data are available on the UCI Machine Learning
Repository (\url{http://archive.ics.uci.edu/ml/}).}.
The dataset comprises 126 signals representing the pronunciation of vowel \texttt{/a/} by patients with Parkinson's disease. Each signal was pre-processed into 309 features and labeled as ``acceptable'' or ``unacceptable'' by an expert. The dataset is in the form of a $126 \times 309$ matrix with binary labels.

In order to measure classification accuracy, we randomly split the dataset 100 times into 90\% training samples and 10\% testing samples;
the LMPM is
constructed using the training samples and classification accuracy is measured using the testing samples.
In building the LMPM, the true mean vectors are
estimated from the sample means and covariance matrices are estimated from the following PD estimators:
(1) ``Sample,'' the sample covariance matrix added by $\delta {\bf I}$, where $\delta = 10^{-2}$;
 (2) ``Diagsample,'' the diagonal matrix of the sample variances;
 (3) ``LedoitWolf,'' the linear shrinkage estimator by \cite{Ledoit2004};
 (4) ``CondReg,'' the condition-number-regularized estimator by \cite{Won2013};
 (5) ``Adap.+EigCon," the eigenvalue constraint estimator by \cite{Xue2012} based on the adaptive thresholding
 estimator by \cite{Cai2011b};
 and (6) ``Adap.+FSPD,'' the proposed FSPD estimate ($\alpha=\alpha^*, \mu=\mu_{\rm SF}$) induced by the adaptive
 thresholding estimator. Here,
 unlike the numerical study in Section \ref{sec:simulation}, we use the adaptive thresholding estimator as an initial
 regularized covariance estimator
 instead of the (universal) soft thresholding estimator. This is because marginal variances of the given data can
 be unequal over variables, in which case adaptive thresholding is known to perform better than the universal
 thresholding \citep{Cai2011b}. The tuning parameter of the adaptive thresholding estimator is selected using
 five-fold CV as in \citet{Cai2011b}. The pre-determined constant $\epsilon$ for Adap.+EigCon and Adap.+FSPD is
 set as $10^{-2}$.

The average and standard deviation of classification accuracy over 100 random partitions are
 reported in Table \ref{table:linearMPM}. All the regularization methods (items 3--6) show
better classification accuracy than the naive sample covariance matrix (item 1). In addition, Adap.+FSPD has the highest
accuracy with
average 89.2\% and standard deviation 9.2\%. This record is highly competitive to the results
reported in the original paper
\citep{Tsanas2014}, which is based on a random forest and support vector machine after a feature selection algorithm named
LOGO by \citet{Sun2010}.
Finally, we note that the entire process of Adap.+FSPD, from covariance matrix estimation to LMPM construction, takes
only 0.68 seconds.

\begin{table}[h]
\begin{center}
{\small
\begin{tabular}{cccccc}
\hline
  Sample  & Diagsample & LedoitWolf & CondReg & Adap.+EigCon& Adap.+LSPD \\ \hline
  73.8 (12.4) & 76.4 (12.6)   & 86.9 (10.7) & 75.3 (13.6)  & 82.0 (18.1) & 89.2 (9.2)\\
\hline
\end{tabular}
}
\caption{The average classification accuracies (standard deviations in parenthesis) for the LMPM with selected PD covariance matrix estimators based on
100 random partitions. The abbreviations of the estimators are introduced in
the main body of the section.}
\label{table:linearMPM}
\end{center}
\end{table}

\subsection{Markowitz portfolio optimization}

\subsubsection{Minimum-variance portfolio allocation and short-sale}

In finance, portfolio refers to a family of (risky) assets held by an institution or private individual.
If there are multiple assets to invest in, a combination of assets is considered and it becomes an important issue
 to select an optimal allocation of portfolio.
\emph{Minimum variance portfolio} (MVP) optimization is one of the well-established strategies for portfolio allocation \citep{Chan1999}. The author proposes to choose a portfolio that minimizes risk, standard deviation of return. Since the MVP problem may yield an optimal allocation with short-sale or leverage, \cite{Jagannathan2003} proposes to add a no-short-sale constraint to the MVP optimization formula.

We introduce the two approaches briefly.
Let ${\bf r}:= (r_1, \ldots, r_p)^{\top}$ be a $p$-variate random vector in
which each $r_j$ represents the return of the $j$-th asset constituting portfolio ($j = 1, \ldots, p)$.
Denote by ${\bf \Sigma}:= \var ({\bf r})$ the unknown covariance matrix of assets.
A $p$-by-1 vector ${\bf w}$ represents an allocation of the investor's wealth in such a way that each $w_j$ stands for the weight of the $j$-th asset and $\sum_{j=1}^p w_j = 1$.
Then, the MVP optimization by \citet{Chan1999} is formulated as
\begin{equation}\label{eqn:minvar}
\minimize_{\bf w}\, {\bf w}^{\top}{\bf \Sigma}{\bf w} ~~\mbox{subject to}~~
	{\bf w}^{\top}{\bf 1} = 1.
\end{equation}
Note that (\ref{eqn:minvar}) allows its solution to have a weight that is negative (short-sale) or greater than 1 (leverage). 
\cite{Jagannathan2003} points out that short-sale and leveraging are sometimes impractical because of
legal constraints and considers  the MVP optimization problem with no short-sale constraint:
%
\begin{equation}\label{eqn:minvarnoshort}
\minimize_{\bf w}\, {\bf w}^{\top}{\bf \Sigma}{\bf w} ~~\mbox{subject to}~~
	{\bf w}^{\top}{\bf 1} = 1, ~ {\bf w} \geq {\bf 0},
\end{equation}
where ${\bf w} \geq {\bf 0}$ is defined component-wise.
The combination of two constraints in (\ref{eqn:minvarnoshort}) ensures that the resulting optimal weights  are restricted to $[0,1]$.
\cite{Jagannathan2003} empirically and analytically illustrate that (\ref{eqn:minvarnoshort})
could have a smaller risk than (\ref{eqn:minvar})
even if the no-short-sale constraint is wrong. We note that the paper handles only the case in which the sample
covariance matrix is nonsingular and plugged into the unknown ${\bf \Sigma}$ in (\ref{eqn:minvarnoshort}). In principle, ${\bf \Sigma}$ can be replaced by a suitable PD estimator.

\subsubsection{Example: Dow Jones stock return}

The aim of this analysis is not only to reproduce with different data the finding of \cite{Jagannathan2003} that
 the no-short-sale constraint does not affect the risk of (\ref{eqn:minvar}), but also to investigate empirically whether
 the same conclusion can hold for another choice of PD covariance matrix estimator.
We compare two portfolio optimization schemes: simple MVP (\ref{eqn:minvar}) and no-short-sale MVP
(\ref{eqn:minvarnoshort}). The unknown covariance matrix ${\bf \Sigma}$ is estimated with seven PD covariance
matrix estimators. Five estimators, (1) ``Sample,'' (2) ``LedoitWolf,'' (3) ``CondReg,'' (4) ``Adap.+EigCon,'' and
(5) ``Adap.+FSPD,'' are already introduced in Section \ref{subsubsection:speech}. To these, we add (6) ``POET+EigCon,''
an eigenvalue constraint estimator based on the POET estimator proposed by \cite{Fan2013} (see \citet{Xue2012}'s
discussion section in the cited paper), and (7) ``POET+FSPD,'' the FSPD estimator induced by the POET estimator.
The reason why we additionally consider (6) and (7) is that stock return data are believed to have a factor structure that
can supposedly be reflected through the POET estimation.

As data, we use the 30 stocks that constituted the Dow Jones Industrial Average in July 2008;
previously, these were used in \cite{Won2013}. The dataset contains the daily closing prices from December 1992 to June
2008, adjusting for splits and dividend distributions. The portfolios are constructed as follows.
For covariance matrix estimation, we use the stock returns of the past 60 or 240 trading days (approximately
3 or 12 months, respectively).
As Condreg, Adap, and POET require the selection of tuning parameters, they are done via five-fold CV in which the
returns of each day are treated as independent samples.
Once the portfolios are established by solving (\ref{eqn:minvar}) and (\ref{eqn:minvarnoshort}) with covariance matrices
 plugged in by the corresponding estimators, we hold each for 60 trading days. This process begins on February 18,
 1994, and is continually repeated until July 6, 2008, producing 60 holding periods in total. We record all
 the returns for each trading day and summarize them in the form of \emph{realized return},
  \emph{realized risk}, and \emph{the Sharpe ratio}. Here, the realized return and realized risk of a portfolio are defined
 as the average and standard deviation of daily returns from that portfolio, respectively. The Sharpe ratio is a risk-adjusted
performance index, defined as $\{$(realized return) - (risk-free rate)$\}$/(realized risk). The risk-free rate is set
  at 5\% per annum.

\begin{table}[h]
\begin{center}
\begin{tabular}{c|cc|cc|cc}
\hline
 & \multicolumn{2}{c|}{Realized return [\%]}  &  \multicolumn{2}{c|}{Realized risk [\%]}
 &  \multicolumn{2}{c}{Sharpe Ratio}  \\
& Simple & No Short. & Simple & No Short. & Simple & No Short. \\
\hline
\multicolumn{7}{c}{(Portfolio rebalancing based on 60 previous trading days)}\\
Sample  &  22.49 & 23.11 & 3.28 & 3.34 & 5.33 & 5.41 \\
LedoitWolf  &  21.25 & 21.61 & 3.11 & 3.06 & 5.23 & 5.44 \\
CondReg  &  24.70 & 24.70 & 4.17 & 4.16 & 4.73 & 4.73 \\
Adap.+FSPD  &  22.74 & 23.26 & 3.59 & 3.62 & 4.95 & 5.04 \\
Adap.+EigCon  &  22.74 & 23.38 & 3.35 & 3.40 & 5.30 & 5.40 \\
POET+FSPD  &  20.99 & 21.85 & 3.13 & 3.07 & 5.11 & 5.49 \\
POET+EigCon  &  20.81 & 21.28 & 3.18 & 3.06 & 4.96 & 5.32 \\
\multicolumn{7}{c}{(Portfolio rebalancing based on 240 previous trading days)}\\
Sample  &  22.42 & 23.07 & 3.33 & 3.37 & 5.24 & 5.36 \\
LedoitWolf  &  21.72 & 23.08 & 2.89 & 2.94 & 5.78 & 6.14 \\
CondReg  &  24.91 & 24.73 & 4.18 & 4.18 & 4.76 & 4.72 \\
Adap.+FSPD  &  22.68 & 23.13 & 3.57 & 3.60 & 4.95 & 5.04 \\
Adap.+EigCon  &  21.25 & 22.47 & 3.30 & 3.36 & 4.92 & 5.19 \\
POET+FSPD &  21.49 & 23.94 & 3.02 & 2.98 & 5.46 & 6.35 \\
POET+EigCon  &  20.93 & 23.49 & 3.07 & 2.99 & 5.18 & 6.19 \\
\hline
\end{tabular}
\caption{Empirical out-of-sample performances, from 30 constituents of DJIA with
 60 days of holding, starting from 2/18/1994. All the rates are annualized.}
\label{table:portfolio}
\end{center}
\end{table}

We present some interpretations and conjectures based on the results summarized in Table \ref{table:portfolio}.
We first compare the realized risks of the simple and no-short-sale MVPs; the third and fourth columns of the table
show that the differences in the realized risks of the respective MVPs are competitive.
The sample covariance matrix reproduces the findings of \cite{Jagannathan2003}, that is, that no-short-sale MVPs have
a smaller risk than simple MVPs. In addition, we find that the regularized covariance matrix estimators
(LedoitWolf, Condreg, the two Adap.s, and the two POETs) also produce similar results. Interestingly, the results of realized
returns in the first and second columns show that no-short-sale MVPs produce higher realized returns
than all simple MVPs except for Condreg, showing a relative improvement of the Sharpe ratios in all cases except
for Condreg.

Next, we compare the 60- and 240-day training results for the construction of portfolios (rows 1--6 and
7--12). Unlike the Sample and the Adap.s, both LedoitWolf and the POETs show higher realized returns and
Sharpe ratios for the 240-day training. This is true for both simple and no-short-sale MVPs. We conjecture that the
factored structure of the POET estimation captures the latent structures of stocks better in long-history data than
 short-history data. However, the higher weight on the identity matrix in LedoitWolf made the portfolio behave as
 an equal-weight investment strategy, which is practically known to work well in long-term investment.

Condreg performs somewhat differently from other regularized covariance matrix estimators; the realized return of Condreg
dominates those of the other methods. This confirms the results in \cite{Won2013}, which illustrates that an MVP with
 Condreg produces the highest wealth growth. This seems to be empirical evidence for ``high risk, high return,'' as CondReg
 shows the highest realized risk.

Finally, we compare the two PD covariance matrix estimation methods, that is, the FSPD estimators (Adap.+LSPD and
 POET+FSPD) and the eigenvalue constraint estimators (Adap.+EigCon and POET+EigCon). For POET-based estimation,
 FSPD approaches produce higher Sharpe ratios than the eigenvalue constraint methods in all cases.
 For adaptive thresholding estimation, eigenvalue constraint methods have higher Sharpe ratios than
  the FSPD estimates except for the case of a simple MVP applied for over 60 trading days. However, the POET+FSPD
 approach applied for over 240-trading days produces the highest Sharp ratios for MVPs for both the simple and
 no-short-sale cases. In addition, as the FSPD estimators are far simpler and faster than the eigenvalue constraint
 estimator, they are more desirable for practical use.

\section{Estimation of positive definite precision matrices}\label{sec:prec}

In this section, we emphasize that the FSPD approach can be applied to estimating PD precision
matrix estimators as well. Let ${\bf \Omega} = {\bf \Sigma}^{-1}$ be the unknown true precision matrix and
$\widehat{\bf \Omega}$ be one of its estimators, which possibly lacks PD. In the theory presented in Section
 \ref{sec:LSPD}, the initial estimator $\widehat{\bf \Sigma}$ can be treated as a generic symmetric matrix and
 thus Theorem \ref{thm:LSPDsumm} is valid for $\widehat{\bf \Omega}$. Therefore, we have the following theorem.

\begin{theorem}
\label{thm:LSPDprec}
Let $\widehat{\bf \Omega}$ be any estimator of the true precision matrix ${\bf \Omega}$. 
Suppose $\alpha=\alpha^*$ and $\mu \in [\mu_{\rm SF}, \infty)$, 
where $\alpha^*$ and $\mu_{\rm SF}$ are defined in Theorem \ref{thm:LSPDsumm} with 
$\widehat{\gamma}_i$, being understood by the $i$-th smallest eigenvalue of $\widehat{\bf \Omega}$.  Then, under the assumption {\bf (A1)},
we have
\[
\left\| {\bf \Phi}_{\mu, \alpha^*}\Big(\widehat{\bf \Omega}\Big) - {\bf \Omega} \right\|_2 \,\leq\,
	2\left\| \widehat{\bf \Omega} - {\bf \Omega} \right\|_2.
\]
Further, if {\bf (A2)} holds additionally,
\[
\left\| {\bf \Phi}_{\mu, \alpha^*}\Big(\widehat{\bf \Omega}\Big) - {\bf \Omega} \right\|_{\rm F} \,=\,
	\Big( 1 + O_p(1) \Big) \cdot
	\left\| \widehat{\bf \Omega} - {\bf \Omega} \right\|_{\rm F}.
\]
\end{theorem}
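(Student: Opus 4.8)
The plan is to observe that the entire development in Section~\ref{subsec:convrate} — in particular Theorem~\ref{thm:LSPDsumm}, Theorem~\ref{prop:LSPDcov1}, and Theorem~\ref{prop:LSPDcov2} — never uses the fact that $\widehat{\bf \Sigma}$ is a covariance matrix estimator or that ${\bf \Sigma}$ is positive definite: the initial estimator enters only as a generic symmetric matrix and the target only through its ordered eigenvalues, and the only auxiliary facts invoked are the triangle inequality, Weyl's perturbation inequality, and the Wielandt--Hoffman inequality. Hence I would re-run those arguments verbatim with $\widehat{\bf \Omega}$ in place of $\widehat{\bf \Sigma}$ and ${\bf \Omega}$ in place of ${\bf \Sigma}$, after fixing the convention that $\widehat{\gamma}_i$ denotes the $i$-th smallest eigenvalue of $\widehat{\bf \Omega}$, that $\gamma_i$ denotes the $i$-th smallest eigenvalue of ${\bf \Omega}$, and that the shrinkage parameters $\alpha^*$ and $\mu_{\rm SF}$ are formed from this spectrum, so that ${\bf \Phi}_{\mu, \alpha^*}(\widehat{\bf \Omega})$ has smallest eigenvalue at least $\epsilon$ and still shares the support of $\widehat{\bf \Omega}$.

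Concretely, for the spectral-norm claim I would start from
\[
\left\| {\bf \Phi}_{\mu, \alpha^*}\big(\widehat{\bf \Omega}\big) - {\bf \Omega} \right\|_2
\,\leq\, \left\| {\bf \Phi}_{\mu, \alpha^*}\big(\widehat{\bf \Omega}\big) - \widehat{\bf \Omega} \right\|_2 + \left\| \widehat{\bf \Omega} - {\bf \Omega} \right\|_2,
\]
bound the first term by $(\epsilon - \widehat{\gamma}_1)_+$ using part~1 of Theorem~\ref{thm:LSPDsumm} applied to the symmetric matrix $\widehat{\bf \Omega}$, and then use {\bf (A1)} (now read as $\epsilon < \gamma_1({\bf \Omega})$) together with Weyl's perturbation inequality exactly as in \eqref{epsbound} to get $(\epsilon - \widehat{\gamma}_1)_+ \leq (\gamma_1 - \widehat{\gamma}_1)_+ \leq \|\widehat{\bf \Omega} - {\bf \Omega}\|_2$; adding the two bounds produces the factor $2$. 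For the Frobenius-norm claim I would repeat the chain of inequalities in the proof of Theorem~\ref{prop:LSPDcov2}: bound $\|{\bf \Phi}_{\mu, \alpha^*}(\widehat{\bf \Omega}) - \widehat{\bf \Omega}\|_{\rm F} \leq (\epsilon - \widehat{\gamma}_1)_+$ by part~2 of Theorem~\ref{thm:LSPDsumm}, replace $(\epsilon - \widehat{\gamma}_1)_+$ by $(\gamma_1 - \widehat{\gamma}_1)_+$, invoke {\bf (A2)} (with $\widehat{\gamma}_i$ and $\gamma_i$ now the eigenvalues of $\widehat{\bf \Omega}$ and ${\bf \Omega}$) to rewrite this as $O_p(1)\sqrt{\sum_i (\widehat{\gamma}_i - \gamma_i)^2 / p}$, and finally bound that sum by $\|\widehat{\bf \Omega} - {\bf \Omega}\|_{\rm F}$ via the Wielandt--Hoffman inequality, which yields the $(1 + O_p(1))$ factor.

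There is essentially no new obstacle here: the content is entirely a matter of checking that every ingredient used in Section~\ref{subsec:convrate} is a purely deterministic statement about symmetric matrices and their eigenvalues, which it is. The one point worth making explicit in the write-up is that {\bf (A1)} and {\bf (A2)} are to be re-interpreted in terms of the spectrum of the precision matrix and its estimator rather than of the covariance matrix; and, since the FSPD procedure acts directly on $\widehat{\bf \Omega}$, no attempt is made to relate the estimation error of $\widehat{\bf \Omega}$ to that of any covariance-matrix estimator — the conclusion is simply that if $\widehat{\bf \Omega}$ is itself consistent or minimax rate-optimal, then so is ${\bf \Phi}_{\mu, \alpha^*}(\widehat{\bf \Omega})$, while being PD with the same support.
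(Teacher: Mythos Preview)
Your proposal is correct and matches the paper's own treatment exactly: the paper does not give a separate proof of Theorem~\ref{thm:LSPDprec} but simply remarks that the development in Section~\ref{sec:LSPD} treats $\widehat{\bf \Sigma}$ as a generic symmetric matrix, so Theorem~\ref{thm:LSPDsumm} (and hence the arguments of Theorems~\ref{prop:LSPDcov1} and~\ref{prop:LSPDcov2}) carry over verbatim to $\widehat{\bf \Omega}$. Your write-up is in fact more explicit than the paper's, spelling out the triangle inequality, Weyl, and Wielandt--Hoffman steps and noting that {\bf (A1)} and {\bf (A2)} must be reinterpreted for the precision-matrix spectrum; nothing further is needed.
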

Theorem \ref{thm:LSPDprec} implies that, as in the estimation of the PD covariance matrix, the FSPD-updated precision
matrix estimator can preserve both the support and convergence rate.

Several regularization methods are proposed for sparse precision matrices, and, like regularized covariance matrix estimation,
 these proposed estimators do not guarantee PDness in  finite sample. Examples include CLIME
 by \citet{Cai2011}, neighborhood selection by \citet{Meinshausen2006}, SPACE by \citet{Peng2009}, symmetric lasso
 by \citet{Friedman2010}, and CONCORD by \citet{Khare2015}. In addition, penalized Gaussian likelihood methods,
 whose very definition ensures PDness of the solution, may have a solution lacking PDness, since the relevant optimization
 algorithms approximate solutions by a non-PD matrix. For instance, \cite{Mazumder2012} reports that the graphical
 lasso algorithm by \citep{Friedman2008,Witten2011} solves the problem in an indirect way and could return non-PD
 solutions.

Despite the possibility of non-PDness, we empirically find that the precision matrix estimators listed
above are likely to be PD; thus, they suffer minimally from potential non-PDness. Let $\widehat{\bf \Omega}(\lambda)$ be
 a regularized precision estimator with tuning parameter $\lambda$. Although non-PDness often arises in
 $\widehat{\bf \Omega}(\lambda)$ when $\lambda$ is not large and $\widehat{\bf \Omega}(\lambda)$ is dense,
 we observe that, with optimal selection of $\lambda^*$ through CV, $\widehat{\bf \Omega}(\lambda^*)$ is
 PD in most cases. The case ${\bf \Omega}={\bf M}_1$ (tapered Toeplitz matrix) is the only one in which we find non-PDness
 in some precision matrix estimators (CONCORD and symmetric lasso). Table \ref{table:precspectrum} summarizes the
 spectral information of these estimators.

\begin{table}[h]
{\small
\begin{center}
\begin{tabular}{cc|cc|cc}
\hline
& & \multicolumn{2}{c|}{CONCORD}& \multicolumn{2}{c}{Symmetric lasso} \\
$p$ & & Min. eig. & \#(PD) & Min. eig. & \#(PD)  \\
\hline
 100 & $\mathcal{N}$  & 2.49e-03 (2.44e-05) & 100/100 & 2.55e-03 (2.66e-05) & 100/100 \\
     & $t$  & 2.12e-03 (4.07e-05) & 100/100 & 2.23e-03 (4.89e-05) & 100/100 \\
\hline
 200 & $\mathcal{N}$  & 5.90e-04 (1.03e-05) & 100/100 & 6.55e-04 (1.03e-05) & 100/100 \\
     & $t$  & 3.05e-04 (2.41e-05) & 92/100 & 4.11e-04 (2.33e-05) & 94/100 \\
\hline
 400 & $\mathcal{N}$  & -3.44e-04 (4.09e-05) & 14/100 & -1.60e-04 (1.00e-07) & 4/100 \\
     & $t$  & -3.43e-04 (1.21e-04) & 23/100 & -2.92e-04 (2.80e-05) & 9/100 \\
\hline
\end{tabular}
\caption{Spectral information of selected sparse precision matrix estimators when the true precision matrix is
${\bf M}_1$ (tapered Toeplitz matrix).}
\end{center}
}
\label{table:precspectrum}
\end{table}

\section{Concluding remarks}\label{sec:concluding}

In this study, we propose a two-stage approach with covariance matrix regularization and the conversion toward
PDness considered to be two separate steps. Because of its two-staged nature, the proposed FSPD procedure can
 be combined with any regularized covariance or precision matrix estimator. The procedure considers a convex
 combination involving only the initial regularized estimator and the related parameters can be selected in explicit
 form. Thus, the FSPD procedure is optimization-free and can be quickly computed.
 Despite its simplicity, the FSPD estimator enjoys theoretical advantages, in that it can preserve {\it both the sparse structure and
 convergence rate} of a given initial estimator.

The FSPD procedure finds a PD matrix close to the initial estimator subject to the class
of linear shrinkage estimators. Here, we conclude the paper with a discussion on the linear shrinkage
constraint in the FSPD. Consider three classes of covariance matrices
 $\MC{S}_0 = \{ \widehat{\bf \Sigma}^* : \widehat{\bf \Sigma}^* = (\widehat{\bf \Sigma}^*)^{\top},
 \gamma_{1}(\widehat{\bf \Sigma}^*) \geq \epsilon \}$,
$\MC{S}_1 = \{ \widehat{\bf \Sigma}^* \in \MC{S}_0 : {\rm supp}(\widehat{\bf \Sigma}^*) =
{\rm supp}(\widehat{\bf \Sigma}) \}$, and
$\MC{S}_2 = \{ \widehat{\bf \Sigma}^* \in \MC{S}_0 : \widehat{\bf \Sigma}^* = \alpha
\widehat{\bf \Sigma} + (1 - \alpha) \mu {\bf I}, ~ \alpha \in [0,1], ~ \mu \in \mathbb{R} \}$.
In (\ref{eqn:ideal-opt}), we originally aim to solve
 $\min_{\widehat{\bf \Sigma}^* \in \MC{S}_1} \| \widehat{\bf \Sigma}^* -
\widehat{\bf \Sigma} \|$ given $\widehat{\bf \Sigma}$. However, in (\ref{eqn:lspd}), we solve
 $\min_{\widehat{\bf \Sigma}^* \in \MC{S}_2} \| \widehat{\bf \Sigma}^* - \widehat{\bf \Sigma} \|$ (FSPD)
 instead of (\ref{eqn:ideal-opt}), because its solution can be written explicitly. It would be of interest
 to know the cost of solving (\ref{eqn:lspd}) instead of (\ref{eqn:ideal-opt})---to be specific,
 the difference between $\min_{\widehat{\bf \Sigma}^* \in \MC{S}_2} \|
 \widehat{\bf \Sigma}^* - \widehat{\bf \Sigma} \|$ and $\min_{\widehat{\bf \Sigma}^* \in \MC{S}_1} \|
  \widehat{\bf \Sigma}^* - \widehat{\bf \Sigma} \|$.

 First, we find no cost in view of the spectral norm.
 Suppose that $\widehat{\gamma}_{1}$ and $\widehat{\gamma}_{1}^*$ are
 the minimum eigenvalues of $\widehat{\bf \Sigma}$ and $\widehat{\bf \Sigma}^*$, respectively. We
know that $\| \widehat{\bf \Sigma}^* - \widehat{\bf \Sigma} \|_2 \geq  \widehat{\gamma}_{1}^* -
 \widehat{\gamma}_{1} \geq \epsilon - \widehat{\gamma}_{1}$
for all $\widehat{\bf \Sigma}^* \in \MC{S}_0$, and thus $\min_{\widehat{\bf \Sigma}^* \in \MC{S}_0} \|
\widehat{\bf \Sigma}^* - \widehat{\bf \Sigma} \|_2 \geq \epsilon - \widehat{\gamma}_{1}$. Recall
that Lemma \ref{lemma:muspect} implies that
$\min_{\widehat{\bf \Sigma}^* \in \MC{S}_2}  \|
\widehat{\bf \Sigma}^* - \widehat{\bf \Sigma} \|_2  \leq \epsilon - \gamma_{1}$, and
\begin{equation} \nonumber
\min_{\widehat{\bf \Sigma}^* \in \MC{S}_0}  \|
\widehat{\bf \Sigma}^* - \widehat{\bf \Sigma} \|_2  \leq \min_{\widehat{\bf \Sigma}^* \in \MC{S}_1}   \|
\widehat{\bf \Sigma}^* - \widehat{\bf \Sigma} \|_2 \leq \min_{\widehat{\bf \Sigma}^* \in \MC{S}_2}  \|
\widehat{\bf \Sigma}^* - \widehat{\bf \Sigma} \|_2 \leq \epsilon - \gamma_{1}
\end{equation}
because $\MC{S}_2 \subseteq \MC{S}_1 \subseteq \MC{S}_0$. Therefore,
$\min_{\widehat{\bf \Sigma}^* \in \MC{S}_2} \|
 \widehat{\bf \Sigma}^* - \widehat{\bf \Sigma} \|_2 = \min_{\widehat{\bf \Sigma}^* \in \MC{S}_1} \| \widehat{\bf \Sigma}^*
 - \widehat{\bf \Sigma}\|_2 = \epsilon - \widehat{\gamma}_{1}$. Thus, no additional cost is given by using the linear shrinkage
 constraint in spectral norm.

 Second, in view of the Frobenius norm, we recall that  Lemma \ref{lemma:mufrob} tells
 $
 \min_{\widehat{\bf \Sigma}^* \in \MC{S}_2} \| \widehat{\bf \Sigma}^* - \widehat{\bf \Sigma} \|_{\rm F}  = (\epsilon - \widehat{\gamma}_{1}) \cdot
 \sqrt{\frac{\sum_{i=1}^p ({\widehat{\gamma}}_i - {\overline{\widehat{\gamma}}})^2}{\sum_{i=1}^p ({\widehat{\gamma}}_i - {\widehat{\gamma}}_1)^2} },
 $
 whereas, in general,
 $
 \min_{\widehat{\bf \Sigma}^* \in \MC{S}_0} \| \widehat{\bf \Sigma}^* - \widehat{\bf \Sigma} \|_{\rm F}
 = \sqrt{ \sum_{i=1}^{p} (\epsilon - \widehat{\gamma}_i)_+^2 / p }.
 $ 
The two bounds (minimums) depend on the eigenvalues of the true covariance matrix 
and the initial estimator in a complex way and hard to understand their closeness. For example, 
suppose we consider the FSPD estimator based on the soft thresholding estimator. 
A numerical investigation, not reported in this paper, shows the difference between two 
bounds decreases as $p$ increases in all the scenarios of ${\bf \Sigma}={\bf M}_2$ 
in Section \ref{sec:simulation}. However, we do not find any special pattern in the scenarios
of ${\bf \Sigma}={\bf M}_1$.

\end{document}